%final = 1 turns off authors-only comments
\def\final{1}

\documentclass[11pt]{article}
	
\usepackage{amsthm, amsmath, amssymb}
\usepackage{mathtools}
\usepackage{bbm}
\usepackage{etex}
\usepackage{pgf}
\usepackage{enumerate}
\usepackage{enumitem}
\usepackage{framed}
\usepackage{verbatim}
\usepackage[left=1.0in, right=1.0in, top=1.0in, bottom=1.0in]{geometry}
\usepackage{microtype}
\usepackage[T1]{fontenc}
\usepackage{kpfonts}
	\DeclareMathAlphabet{\mathsf}{OT1}{cmss}{m}{n}
	\SetMathAlphabet{\mathsf}{bold}{OT1}{cmss}{bx}{n}
	\DeclareMathAlphabet{\mathtt}{OT1}{cmtt}{m}{n}
	\SetMathAlphabet{\mathtt}{bold}{OT1}{cmtt}{bx}{n}

\usepackage{multirow, tabularx}
\usepackage{algorithm, algorithmic}
\usepackage{color}
	\definecolor{DarkGreen}{rgb}{0.15,0.5,0.15}
	\definecolor{DarkRed}{rgb}{0.6,0.2,0.2}
	\definecolor{DarkBlue}{rgb}{0.15,0.15,0.55}
	\definecolor{DarkPurple}{rgb}{0.4,0.2,0.4}

\usepackage[pdftex]{hyperref}
\hypersetup{
	linktocpage=true,
	colorlinks=true,			% false: boxed links; true: colored links
	linkcolor=DarkBlue,	% color of internal links
	citecolor=DarkBlue,	% color of links to bibliography
	urlcolor=DarkBlue,		% color of external links
	}
	
\newcolumntype{Y}{>{\centering\arraybackslash}X}

%notes macros
\setlength\marginparwidth{60pt}
\setlength\marginparsep{5pt}
\ifnum\final=0
\newcommand{\mynote}[2]{{\color{#1} \marginpar{\tiny #2}}}
\newcommand{\mybignote}[2]{{\color{#1} $\langle \langle$ #2$\rangle \rangle$}}
\else
\newcommand{\mynote}[2]{}
\newcommand{\mybignote}[2]{}
\fi
\newcommand{\jon}[1]{\mynote{DarkRed}{Jon: {#1}}}

\newcommand{\mitali}[1]{\mynote{DarkBlue}{Mitali: {#1}}}

%useful algorithms macro

%general useful macros
\newcommand{\p}[1]{\mathbb{P}\left[ #1 \right]}
\newcommand{\pr}[2]{\underset{#1}{\mathbb{P}}\left[ #2 \right]}
\newcommand{\e}[1]{\mathbb{E}\left[ #1 \right]}
\newcommand{\ex}[2]{\underset{#1}{\mathbb{E}}\left[ #2 \right]}

\newcommand{\zo}{\{0,1\}}

\newcommand{\pmo}{\{\pm1\}}

\newcommand{\set}[1]{\left\{#1\right\}}

\newcommand{\from}{:}
\newcommand{\tsum}{\textstyle \sum}

\renewcommand{\epsilon}{\varepsilon}
\newcommand{\eps}{\varepsilon}

\newcommand{\prob}{\operatorname{\mathbb{P}}}
\newcommand{\E}{\operatorname{\mathbb{E}}}
\newcommand{\brac}[1]{\left[ #1 \right]}
\newcommand{\paren}[1]{\left( #1 \right)}
\newcommand{\abs}[1]{\left\lvert #1 \right\rvert}

\newcommand{\iprod}[2]{\langle #1 , #2 \rangle}
\newcommand{\nn}{\mathbb{N}}

\newcommand{\ii}{\mathbb{I}}
%%--------- Sets of variables

\newcommand{\vecq}{\mathbf{q}}
\newcommand{\vect}{\mathbf{t}}

\newcommand{\vecx}{\mathbf{x}}
\newcommand{\vecy}{\mathbf{y}}

%mathbb
\newcommand{\N}{\mathbb{N}}

%mathcal
\newcommand{\cA}{\mathcal{A}}

\newcommand{\cM}{\mathcal{M}}

\newcommand{\cR}{\mathcal{R}}

%%--------Random macros
\newcommand{\ds}{\{\pm{1}\}^{n \times d}}
\newcommand{\hatt}{\hat{\mathbf{t}}}
\newcommand{\IN}{\mathrm{IN}}
\newcommand{\OUT}{\mathrm{OUT}}
\newcommand{\attack}{\mathcal{A}}
\newcommand{\given}{\; \big | \; }

%theorem macros
\newtheorem{theorem}{Theorem}[section]
\newtheorem{thm}[theorem]{Theorem}
\newtheorem{lemma}[theorem]{Lemma}
\newtheorem{lem}[theorem]{Lemma}

\newtheorem{claim}[theorem]{Claim}

\theoremstyle{definition}

\newtheorem{definition}[theorem]{Definition}

%\usepackage{tikz}
%\usepackage[utf8]{inputenc}
%\usetikzlibrary{graphs}
%\usetikzlibrary{arrows,automata,chains,matrix}
%\usetikzlibrary{positioning,scopes}
%\usetikzlibrary{decorations.pathreplacing}
%\usetikzlibrary{shapes.geometric}

\title{The Price of Selection in Differential Privacy}
\author{Mitali Bafna\thanks{IIT Madras, Department of Computer Science and Engineering.  This research was performed while the author was a visiting scholar at Northeastern University.  \href{mailto:mitali.bafna@gmail.com}{mitali.bafna@gmail.com}} \and Jonathan Ullman\thanks{Northeastern University, College of Computer and Information Science.  \href{mailto:jullman@ccs.neu.edu}{jullman@ccs.neu.edu}}}

\begin{document}

\pagenumbering{gobble}
\maketitle
\begin{abstract}
In the differentially private top-$k$ selection problem, we are given a dataset $X \in \pmo^{n \times d}$, in which each row belongs to an individual and each column corresponds to some binary attribute, and our goal is to find a set of $k \ll d$ columns whose means are approximately as large as possible.  Differential privacy requires that our choice of these $k$ columns does not depend too much on any on individual's dataset.  This problem can be solved using the well known exponential mechanism and composition properties of differential privacy.  In the high-accuracy regime, where we require the error of the selection procedure to be to be smaller than the so-called sampling error $\alpha \approx \sqrt{\ln(d)/n}$, this procedure succeeds given a dataset of size $n \gtrsim k \ln(d)$.

We prove a matching lower bound, showing that a dataset of size $n \gtrsim k \ln(d)$ is necessary for private top-$k$ selection in this high-accuracy regime.  Our lower bound is the first to show that selecting the $k$ largest columns requires more data than simply estimating the value of those $k$ columns, which can be done using a dataset of size just $n \gtrsim k$.
\end{abstract}

\vfill
\newpage

\pagenumbering{arabic}

\section{Introduction}
The goal of privacy-preserving data analysis is to enable rich statistical analysis of a sensitive dataset while protecting the privacy of the individuals who make up that dataset.  It is especially desirable to ensure \emph{differential privacy}~\cite{DworkMNS06}, which ensures that no individual's information has a significant influence on the information released about the dataset.  The central problem in differential privacy research is to determine precisely what statistics can be computed by differentially private algorithms and how accurately they can be computed.

The seminal work of Dinur and Nissim~\cite{DinurN03} established a ``price of privacy'':  If we release the answer to $\gtrsim n$ statistics on a dataset of $n$ individuals, and we do so with error that is asymptotically smaller than the \emph{sampling error} of $\approx 1/\sqrt{n}$, then an attacker can reconstruct nearly all of the sensitive information in the dataset, violating any reasonable notion of privacy.  For example, if we have a dataset $X = (x_1,\dots,x_n) \in \{\pm 1\}^{n \times d}$ and we want to privately approximate its \emph{marginal vector} $q = \frac{1}{n} \sum_{i=1}^{n} x_i$, then it is suffices to introduce error of magnitude $\Theta(\sqrt{d} / n)$ to each entry of $q$~\cite{DinurN03, DworkN04, BlumDMN05, DworkMNS06}, and this amount of error is also necessary~\cite{BunUV14, SteinkeU15b}.  Thus, when $d \gg n$, the error must be asymptotically larger the sampling error.

\paragraph{Top-k Selection.}
In many settings, we are releasing the marginals of the dataset in order to find a small set of ``interesting'' marginals, and we don't need the entire vector.  For example, we may be interested in finding only the attributes that are unusually frequent in the dataset.  Thus, an appealing approach to overcome the limitations on computing marginals is to find only the \emph{top-$k$} (approximately) largest coordinates of the marginal vector $q$, up to some error $\alpha$.\footnote{Here, the algorithm has error $\alpha$ if it returns a set $S \subseteq \set{1,\dots,d}$ consisting of of $k$ coordinates, and for each coordinate $j \in S$, $q_{j} \geq \tau - \alpha$, where $\tau$ is the $k$-th largest value among all the coordinates $\{q_1,\dots,q_{d}\}$.}  

Once we find these $k$ coordinates, we can approximate the corresponding marginals with additional error $O(\sqrt{k}/n)$.  But, how much error must we have in the top-$k$ selection itself?  The simplest way to solve this problem is to greedily find $k$ coordinates using the differentially private \emph{exponential mechanism}~\cite{McSherryT07}.  This approach finds the top-$k$ marginals up to error $\lesssim \sqrt{k} \log(d)/n$.  The \emph{sparse vector algorithm}~\cite{DworkNPR10, RothR10, HardtR10} would provide similar guarantees.

Thus, when $k \ll d$, we can find the top-$k$ marginals and approximate their values with much less error than approximating the entire vector of marginals.  However, the bottleneck in this approach is the $\sqrt{k} \log(d)/n$ error in the selection procedure, and this $\log(d)$ factor is significant in very high-dimensional datasets.  For comparison, the sampling error for top-$k$ selection is $\approx \sqrt{\log(d)/n}$ so the error introduced is asymptotically larger than the sampling error when $k \log(d) \gg n$.  However, the best known lower bound for top-$k$ selection follows by scaling down the lower bounds for releasing the entire marginal vector, and say that the error must be $\gtrsim \sqrt{k}/n$.

Top-$k$ selection is a special case of fundamental data analysis procedures like variable selection and sparse regression.  Moreover, private algorithms for selection problems underlie many powerful results in differential privacy: private control of false discovery rate~\cite{DworkS15}, algorithms for answering exponentially many queries~\cite{RothR10, HardtR10, GuptaRU12, JainT12, Ullman15}, approximation algorithms~\cite{GuptaLMRT10}, frequent itsemset mining~\cite{BhaskarLST10}, sparse regression~\cite{SmithT13}, and the optimal analysis of the generalization error of differentially private algorithms~\cite{BassilyNSSSU16}.  Therefore it is important to precisely understand optimal algorithms for differentially private top-$k$ selection.

Our main result says that existing differentially private algorithms for top-$k$ selection are essentially optimal in the high-accuracy regime where the error is required to be asymptotically smaller than the sampling error.

\begin{thm}[Sample Complexity Lower Bound for Approximate Top-$k$] \label{thm:main-lb}
There exist functions $n = \Omega(k \log(d))$ and $\alpha = \Omega(\sqrt{\log(d)/n})$ such that for every $d$ and every $k = d^{o(1)}$, there is no differentially private algorithm $M$ that takes an arbitrary dataset $X \in \pmo^{n \times d}$ and (with high probability) outputs an $\alpha$-accurate top-$k$ marginal vector for $X$.
\end{thm}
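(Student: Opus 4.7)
My plan is to adapt the fingerprinting-based lower-bound framework of \cite{BunUV14, SteinkeU15b} from the problem of releasing all $d$ one-way marginals to the problem of selecting the top $k$. At a high level, I will build a random hard instance in which the true top-$k$ set $T \subseteq [d]$ is hidden, and then design a tracing attack that turns any accurate differentially private top-$k$ algorithm $M$ into a procedure that identifies individuals in the dataset $X$. Since such identification is forbidden by differential privacy, the existence of $M$ will translate into a lower bound on $n$.

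Concretely, I would sample $T$ uniformly among subsets of $[d]$ of size $k$, define column marginals $q_j = +\beta$ for $j \in T$ and $q_j = -\beta$ otherwise for $\beta = \Theta(\alpha)$, and sample $X \in \pmo^{n \times d}$ with independent entries of mean $q_j$. Any $\alpha$-accurate algorithm must then return an $S$ with $\abs{S \cap T} \geq (1-o(1))k$ with high probability. I would next define a per-row tracing statistic roughly of the form $Z_i = \sum_{j \in S} (X_{ij} - q_j)\cdot \mathrm{sign}(j \in T)$, arranged so that (a) for a genuine row $X_i$ of the dataset, $\E[Z_i]$ is substantially positive, since $M$'s output is correlated with $T$ and hence with the bias in $X_i$; and (b) for a fresh row $x'$ drawn from the same marginals but independent of $X$, the analogous expectation $\E[Z(x', S)]$ is near zero.

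The accuracy hypothesis forces the total tracing signal $\sum_i \E[Z_i]$ to be large, of order $k \log(d/k)\cdot \beta^2$, reflecting the fact that identifying most of $T$ amounts to revealing roughly $\log\binom{d}{k} \approx k \log(d/k)$ bits about the hidden set. Differential privacy, on the other hand, controls the gap between the in-sample and fresh-row expectations by roughly $\eps$ per row, so the same sum is at most $O(n\eps)$. Balancing these two bounds, and plugging in $\beta \approx \alpha \approx \sqrt{\log(d)/n}$, yields the claimed $n = \Omega(k \log d)$.

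The main obstacle I anticipate is extracting the extra $\log d$ factor beyond what one gets by directly scaling down existing marginal lower bounds. A standard fingerprinting argument only harvests the ``estimation'' content of $S$ (namely, that the $k$ selected columns have approximately known mean $\beta$) and thereby yields only $n \gtrsim \sqrt{k}/\alpha$, losing a $\sqrt{\log d}$ factor. To capture the additional information latent in the combinatorial \emph{choice} of $S$, I would replace the two-point prior on each $q_j$ with a continuous prior --- for instance, drawing $q_j$ from a distribution on $[-1,1]$ and declaring $j \in T$ iff $q_j$ lies above a threshold --- so that correctly including $j$ in $S$ reveals not just a single bit but a refined value of $q_j$. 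A Gaussian-style second-moment analysis in the spirit of \cite{SteinkeU15b} should then extract $\Omega(\log(d/k))$ worth of tracing signal per selected coordinate. Making this continuous-prior argument rigorous, while ensuring that the top-$k$ set remains well-defined at accuracy scale $\alpha$, is where I expect the bulk of the technical effort to lie.
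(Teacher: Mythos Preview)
Your proposal shares the paper's high-level framework---a tracing attack that exploits correlation between rows of $X$ and the output set $S$---but the concrete construction and the mechanism by which the $\log d$ factor enters are quite different, and your planted-bias route has a real obstruction that the paper explicitly sidesteps.

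\textbf{The planted model is the wrong hard instance.} With a two-point prior ($q_j = +\beta$ on $T$, $q_j = -\beta$ off $T$) and $\beta \gg \sqrt{\log(d)/n}$, the empirical top-$k$ equals $T$ with overwhelming probability \emph{regardless of which individuals are in $X$}, so $S$ carries essentially no per-row information and tracing fails. The paper says this outright: when the column biases determine $t(X)$, ``$t(X)$ reveals essentially no information about $X$.'' Conversely, if $\beta$ is on the order of the sampling error, the planted structure is washed out and you are effectively back to unbiased columns. Your continuous-prior patch is meant to address this, but the intuition ``including $j$ in $S$ reveals a refined value of $q_j$'' conflates information about the prior with information about individual rows; it is not clear that a top-$k$ \emph{selection} (as opposed to a numeric estimate) extracts $\Omega(\log(d/k))$ units of fingerprinting signal per coordinate from a smooth prior, and you do not sketch a lemma that would establish this.

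\textbf{How the paper gets the $\log d$ factor.} The paper takes $X$ \emph{uniform} (all $q_j = 0$)---no planted $T$ at all. The key observation is an \emph{anticoncentration} bound: among $d$ independent unbiased columns, the $k$-th largest empirical marginal is $\Theta(\sqrt{\ln(d/k)/n})$ with high probability. The attack is then the bare inner product $\langle y, \hat t\rangle$. For an in-sample row $x_i$, each selected coordinate has conditional mean at least $\gamma - \alpha = \Theta(\sqrt{\ln(d/k)/n})$, so $\langle x_i, \hat t\rangle \approx k\sqrt{\ln(d/k)/n}$; for a fresh $y$, $\langle y, \hat t\rangle = O(\sqrt{k})$. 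Separation requires $k\sqrt{\ln(d/k)/n} \gg \sqrt{k}$, i.e.\ $n \lesssim k\ln(d/k)$. Thus the extra $\log d$ does not come from ``$\log\binom{d}{k}$ bits revealed'' via a prior, but directly from the magnitude of order statistics of binomial noise. For approximate top-$k$, the paper handles the adversarial choice of $\hat t$ by restricting to the random set $S_\lambda = \{j : q_j \geq \gamma - \alpha\}$, showing $|S_\lambda|$ is not much larger than $k$, and union-bounding over all $\binom{|S_\lambda|}{k}$ candidate outputs using a negative-dependence Hoeffding bound across rows; this yields the weaker ``trace all but $cn$ rows'' guarantee, which still contradicts differential privacy via Lemma~\ref{lem:tracingvsdp}.

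In short: drop the planted-$T$ construction, take the data uniform, and let anticoncentration of the column sums supply the $\log d$.
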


\paragraph{Tracing Attacks.}
Our lower bounds for differential privacy follow from a \emph{tracing attack}~\cite{H+08, SOJH09, BunUV14, SteinkeU15b, DworkSSUV15, DworkSSU17}.  In a tracing attack, the dataset $X$ consists of data for $n$ individuals drawn iid from some known distribution over $\pmo^{d}$.  The attacker is given data for a \emph{target individual} $y \in \pmo^{d}$ who is either one of the individuals in $X$ (``$\IN$''), or is an independent draw from the same distribution (``$\OUT$'').  The attacker is given some statistics about $X$ (e.g.~the top-$k$ statistics) and has to determine if the target $y$ is in or out of the dataset.  Tracing attacks are a significant privacy violation, as mere presence in the dataset can be sensitive information, for example if the dataset represents the case group in a medical study~\cite{H+08}.

Our results give a tracing attack for top-$k$ statistics in the case where the dataset is drawn uniformly at random.  For simplicity, we state the properties of our tracing attack for the case of the exact top-$k$ marginals.  We refer the reader to Section~\ref{sec:noisytopk} for a detailed statement in the case of approximate top-$k$ marginals, which is what we use to establish Theorem~\ref{thm:main-lb}.
\begin{thm}[Tracing Attack for Exact Top-$k$] \label{thm:main-attack-exact}
For every $\rho > 0$, every $n \in \nn$, and every $k \ll d \ll 2^n$ such that ${k \log(d/k) \geq O(n\log (1/\rho))}$, there exists an attacker $\mathcal{A}: \{-1,1\}^{d} \times \zo^{d} \to \set{\IN, \OUT}$ such that the following holds:  If we choose $X = (x_1,\dots,x_n) \in \pmo^{n \times d}$ uniformly at random, and $t(X)$ is the exact top-$k$ vector\footnote{Due to the presence of ties, there is typically not a unique top-$k$.  For technical reasons, and for simplicity, we let $t(X)$ denote the unique lexicographically first top-$k$ vector and refer to it as ``the'' top-$k$ vector.} of $X$, then  
\begin{enumerate}
\item If $y \in \pmo^{d}$ is uniformly random and independent of $X$, then
$\prob \brac{A(y,t(X)) = \OUT} \geq 1-\rho,$ and
\item  for every $i \in [n]$,
$\prob \brac{A(x_i,t(X)) = \IN } \geq 1 - \rho$.
\end{enumerate}
\end{thm}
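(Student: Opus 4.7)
The plan is to use a linear correlation attack: viewing $t(X) \in \zo^d$ as the indicator of the top-$k$ set $S$, set $\attack(y, t) = \IN$ iff $\langle y, t\rangle \geq \tau$, for a threshold $\tau = \Theta(\sqrt{k\log(1/\rho)})$ calibrated below. This is the canonical correlation-based tracing attack in the style of Bun--Ullman--Vadhan and Steinke--Ullman. Soundness is immediate: when $y$ is uniform and independent of $X$, conditioning on any fixing of $t(X)$ with $k$ ones makes $\langle y, t(X)\rangle$ a sum of $k$ i.i.d.\ uniform $\pm 1$ variables, so Hoeffding yields $\prob[\langle y, t(X)\rangle \geq \tau] \leq \exp(-\tau^2/2k) \leq \rho$ for an appropriate constant.

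For completeness, we must show $\prob[\langle x_i, t(X)\rangle \geq \tau] \geq 1-\rho$ for every $i$. The mean signal comes from a fingerprinting-style identity. Let $s_j = \sum_{i'} x_{i',j}$ be the column sums and $T = \sum_{j \in S} s_j$ the sum of the top-$k$ of them. Since $\sum_i \langle x_i, t(X)\rangle = T$, row-symmetry gives $\E[\langle x_i, t(X)\rangle] = \E[T]/n$ for every $i$. Standard order-statistics estimates for $d$ independent length-$n$ $\pm 1$ random walks (which, for $d \ll 2^n$, are close to Gaussian at the relevant scale) show that $\E[T] = \Omega(k\sqrt{n\log(d/k)})$, and moreover $T = \Omega(k\sqrt{n\log(d/k)})$ with probability at least $1-\rho/2$ by Gaussian-type tail bounds on top-$k$ order statistics. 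Thus $\E[\langle x_i, t(X)\rangle] = \Omega(k\sqrt{\log(d/k)/n})$, which under the hypothesis $k\log(d/k) \geq C\, n\log(1/\rho)$ is asymptotically larger than $\tau$.

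The crux of the argument --- and the step I expect to be the main obstacle --- is upgrading this mean lower bound to a lower-tail bound $\prob[\langle x_i, t(X)\rangle < \tau] \leq \rho$, despite the fact that $t(X)$ itself depends on $x_i$. The plan is a leave-one-out decomposition: letting $t^{(-i)} = t(X^{(-i)})$ be the top-$k$ indicator of the matrix with row $i$ removed, write
\[
\langle x_i, t(X)\rangle \;=\; \langle x_i, t^{(-i)}\rangle \;+\; \langle x_i,\, t(X) - t^{(-i)}\rangle.
\]
Conditioned on $t^{(-i)}$ (which is independent of $x_i$), the first term is sub-Gaussian with parameter $\sqrt{k}$ and mean zero, so Hoeffding controls its lower tail at the scale $\sqrt{k\log(1/\rho)}$. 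The second, ``drift'' term carries the selection bias: a column $j$ flipping into the top-$k$ upon inclusion of $x_i$ does so precisely because $x_{i,j}$ tipped $s_j$ above the $k$-th largest column-sum threshold, and so such coordinates overwhelmingly satisfy $x_{i,j} = +1$ (and symmetrically for columns that flip out). Making this heuristic rigorous --- i.e., showing the drift carries essentially all of the $\Omega(k\sqrt{\log(d/k)/n})$ signal with probability $1-\rho/2$ --- is the main technical step, and the hypothesis $d \ll 2^n$ should enter here to rule out pathological tie structures in the column sums. Once both pieces are in place, choosing $\tau$ to lie between $C\sqrt{k\log(1/\rho)}$ and $\E[\langle x_i, t(X)\rangle]/2$ completes both parts of the theorem.
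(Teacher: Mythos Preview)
Your soundness argument and the mean calculation for $\langle x_i, t(X)\rangle$ via $\sum_i \langle x_i, t(X)\rangle = T$ and row symmetry match the paper. The divergence is entirely in how the mean lower bound is upgraded to a per-row high-probability lower tail, and here your plan takes a different route from the paper --- one with a real gap.

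In your leave-one-out decomposition, \emph{all} of the signal lives in the drift term: $\langle x_i, t^{(-i)}\rangle$ has mean zero, so you must show drift $\geq \Omega(k\sqrt{\log(d/k)/n})$ with probability $1-\rho/2$. Your sign heuristic is correct (a short rank-monotonicity argument shows that a flip-in forces $x_{i,j}=+1$ and a flip-out forces $x_{i,j}=-1$), but that only yields drift $\geq 0$ pointwise; it gives no high-probability lower bound of the required magnitude. To get one you would need to show that $\Omega(k\sqrt{\log(d/k)/n})$ columns actually swap in/out of the top-$k$ when row $i$ is inserted --- a concentration statement about how many column sums sit within $\pm 1$ of the $k$-th order statistic. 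That is not obvious and is not the one-line step your sketch suggests; you have given no mechanism for it.

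The paper sidesteps the drift entirely with a conditioning trick that you are missing. It conditions on the triple $(q_{(k+1)},\,\pi(k+1),\,t(X))$ --- the value and location of the $(k{+}1)$-st marginal together with the top-$k$ set $T$. The key observation is that, under this conditioning, the columns of $X$ remain \emph{independent}: for each $\ell\in T$ the event ``$\ell$ is in the top-$k$'' becomes the column-local event $q_\ell \geq q_{(k+1)}$ (or strict inequality, according to the lexicographic tie-break). Hence $\langle x_i, t(X)\rangle=\sum_{\ell\in T} x_{i,\ell}$ is a sum of $k$ conditionally independent $\pm 1$ variables, each with conditional mean at least $q_{(k+1)}$. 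A separate anticoncentration step gives $q_{(k+1)}\geq \gamma=\sqrt{\ln(d/2k)/n}$ with probability $1-e^{-\Omega(k)}$, and then a single Hoeffding bound yields $\langle x_i, t(X)\rangle \geq k\gamma - \sqrt{2k\ln(1/\rho)}$ with probability $1-\rho$. No leave-one-out, no drift analysis --- the conditioning recovers exactly the independence you need to run Hoeffding directly on the quantity of interest.
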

While the assumption of uniformly random data is restrictive, it is still sufficient to provide a lower bound for differential privacy.  Tracing attacks against algorithms that release the entire marginal vector succeed under weaker assumptions---each column can have a different and essentially arbitrary bias as long as columns are independent.  However, for top-$k$ statistics, a stronger assumption on the column biases is necessary---if the column biases are such that $t(X)$ contains a specific set of columns with overwhelming probability, then $t(X)$ reveals essentially no information about $X$, so tracing will fail.  Under the weaker assumption that some unknown set of $k$ columns ``stand out'' by having significantly larger bias than other columns, we can use the propose-test-release framework~\cite{DworkL09} to find the exact top-$k$ vector when $n \gtrsim \log(d)$.  An interesting future direction is to characterize which distributional assumptions are sufficient to bypass our lower bound. 

We remark that, since our attack ``traces'' all rows of the dataset (i.e.~$\mathcal{A}(x_i, t(X)) = \IN$ for every $i \in [n]$), the attack bears some similarities to a \emph{reconstruction attack}~\cite{DinurN03, DworkMT07, DworkY08, KasiviswanathanRSU10, KasiviswanathanRS13, NikolovTZ13}.  However, the focus on high-dimensional data and the style of analysis is much closer to the literature on tracing attacks.

\subsection{Proof Overview}
Our results use a variant of the \emph{inner product attack} introduced in~\cite{DworkSSUV15} (and inspired by the work on \emph{fingerprinting codes}~\cite{BonehS95, Tardos03} and their connection to privacy~\cite{Ullman13, BunUV14, SteinkeU15b}).  Given a target individual $y \in \pmo^{d}$, and a top-$k$ vector $t \in \pmo^{d}$, the attack is
\begin{equation*}
\attack(y, t) =
\begin{cases}
\IN &\textrm{if $\langle y, t \rangle \geq \tau$} \\
\OUT &\textrm{otherwise}
\end{cases}
\end{equation*}
where $\tau = \Theta(\sqrt{k})$ is an appropriately chosen threshold.  The key to the analysis is to show that, when $X = (x_1,\dots,x_n) \in \pmo^{n \times d}$ and $y \in \pmo^{d}$ are chosen uniformly at random, $t(X)$ is an accurate top-$k$ vector of $X$, then
\begin{equation*}
\ex{}{\langle y, t(X) \rangle} = 0~~~~~\textrm{and}~~~~~\forall i \in [n]~~\ex{}{\langle x_i, t(X) \rangle} > 2\tau.
\end{equation*}
If we can establish these two facts then Theorem~\ref{thm:main-attack-exact} will follow from concentration inequalities for the two inner products.

Suppose $t(X)$ is the \emph{exact} top-$k$ vector.  Since each coordinate of $y$ is uniform in $\pmo$ and independent of $X$, we can write
$$
\e{\langle y, t(X) \rangle} = \textstyle\sum_{j} \e{y_j \cdot t(X)_j} = \textstyle\sum_{j} \e{y_j} \e{t(X)_j} = 0.
$$
Moreover, for every fixed vector $t \in \pmo^{d}$ with $k$ non-zero coordinates, $\langle y, t \rangle$ is a sum of $k$ independent, bounded random variables.  Therefore, by Hoeffding's inequality we have that $\langle y, t \rangle = O(\sqrt{k})$ with high probability.  Since $y, X$ are independent, this bound also holds with high probability when $X$ is chosen randomly and $t(X)$ is its top-$k$ vector.  Thus, for an appropriate $\tau = \Theta(\sqrt{k})$, $\attack(y, t(X)) = \OUT$ with high probability.

Now, consider the case where $y = x_i$ is a row of $X$, and we want to show that $\e{\langle x_i, t(X) \rangle}$ is sufficiently large.  Since $X$ is chosen uniformly at random.  One can show that, when $k \ll d \ll 2^n$, the top-$k$ largest marginals of $X$ are all at least $\gamma = \Omega(\sqrt{\log(d/k)/n})$.  Thus, on average, when $t(X)_j = 1$, we can think of $x_{i,j} \in \pmo$ as a random variable with expectation $\geq \gamma$.  Therefore,
$$
\e{\langle x_i, t(X) \rangle} = \e{\textstyle\sum_{j: t(X)_j = 1} x_{i,j} } \geq k \gamma = \Omega\left( k \sqrt{\log(d/k) / n} \right)
$$
Even though $x_i$ and $t(X)$ are not independent, and do not have independent entries, we show that with high probability over the choice of $X$, $\langle x_i, t(X) \rangle \geq k \sqrt{\log(d/k)/n} - O(\sqrt{k})$ with high probability.  Thus, if $k \log(d/k) \gtrsim n$, we have that $\attack(x_i,t(X)) = \IN$ with high probability.

\paragraph{Extension to Noisy Top-$k$.}
\jon{Need to change this slightly before we submit.}
The case of $\alpha$-approximate top-$k$ statistics does not change the analysis of $\langle y, t \rangle$ in that case that $y$ is independent of $x$, but does change the analysis of $\langle x_i, t \rangle$ when $x_i$ is a row of $X$.  It is not too difficult to show that for a \emph{random} row $x_i$, $\mathbb{E}[\langle x_i, \hat{t} \rangle] \gtrsim k(\gamma - \alpha)$, but it is not necessarily true that $\langle x_i, t \rangle$ is large for every row $x_i$.  The problem is that for relevant choices of $\alpha$, a random dataset has many more than $k$ marginals that are within $\alpha$ of being in the top-$k$, and the algorithm could choose a subset of $k$ of these to prevent a particular row $x_i$ from being traced.  For example, if there are $3k$ columns of $X$ that could be chosen in an $\alpha$-accurate top-$k$ vector, then with high probability, there exists a vector $t$ specifying $k$ of the columns on which $x_i = 0$, which ensures that $\langle x_i, t \rangle = 0$.

We can, however, show that $\langle x_i, \hat{t} \rangle > \tau$ for at least $(1-c)n$ rows of $X$ for an arbitrarily small constant $c > 0$.  This weaker tracing guarantee is still enough to rule out $(\eps, \delta)$-differential privacy for any reasonable setting of $\eps, \delta$ (Lemma~\ref{lem:tracingvsdp}), which gives us Theorem~\ref{thm:main-lb}.  The exact statement and parameters are slightly involved, so we refer the reader to Section~\ref{sec:noisytopk} for a precise statement and analysis of our tracing attack in the case of approximate top-$k$ statistics (Theorem~\ref{thm:noisytracing}).

\section{Preliminaries}
\begin{definition}[Differential Privacy]
For $\epsilon \geq 0, \rho \in [0,1]$ we say that a randomized algorithm $\cM: \{\pm 1\}^{n \times d} \rightarrow \cR$ is $(\epsilon, \delta)$-differentially private if for every two datasets ${X, X' \in \{\pm 1\}^{n \times d}}$, such that $X, X'$ differ in at most one row, we have that, 
\begin{align*}
\forall S \subseteq \cR~~~~~\prob\brac{\cM(X) \in S} \leq e^\epsilon \cdot \prob\brac{\cM(X') \in S} + \delta. 
\end{align*}
\end{definition}

\begin{definition}[Marginals]
For a dataset $X \in \pmo^{n \times d}$, its \emph{marginal vector} $q(X) = (q_1(X), \ldots, q_d(X))$ is the average of the rows of $X$.  That is, $q_j(X) = \frac{1}{n} \sum_{i=1}^{n} X_{i,j}$. We use the notation 
\begin{align*}
q_{(1)}(X) \geq q_{(2)}(X) \geq \ldots \geq q_{(d)}(X)
\end{align*}
to refer to the \emph{sorted marginals}.  We will also define $\pi : [d] \to [d]$ to be the lexicographically first permutation that puts the marginals in sorted order.  That is, we define $\pi$ so that $q_{\pi(j)} = q_{(j)}$ and if $j < j'$ are such that $q_j = q_{j'}$, then $\pi(j) < \pi(j')$.
\end{definition}

\begin{definition}[Accurate Top-$k$ Vector] Given a dataset $X \in \ds$ and a parameter $\alpha \geq 0$, a vector $\hat{t} \in \{0,1\}^d$ is an \emph{$\alpha-$accurate top-$k$ vector of $X$} if,
	\begin{enumerate}
		\item 
		$\hat{t}$ has exactly $k$ non-zero coordinates, and
		\item
		$(\hat{t}_i = 1) \Rightarrow (q_i(X) \geq q_{(k)}(X) - \alpha).$
	\end{enumerate}
When $\alpha = 0$, we define the \emph{exact top-$k$ vector} of $X$ as $t(X) \in \{0,1\}^d$ to be the lexicographically first $0$-accurate top-$k$ vector.\footnote{Due to ties, there may not be a unique $0$-accurate top-$k$ vector of $X$.  For technical reasons we let $t(X)$ be the unique lexicographically first $0$-accurate top-$k$ vector, so we are justified in treating $t(X)$ as a function of $X$.}  Specifically, we define $t(X)$ so that
 $$(t(X)_j = 1) \Leftrightarrow j \in \{\pi(1),\ldots, \pi(k)\}.$$ We refer to these set of columns as the top-k columns of $X$. 
\end{definition}

For comparison with our results, we state a positive result for privately releasing an $\alpha$-approximate top-$k$ vector, which is an easy consequence of the exponential mechanism and composition theorems for differential privacy
\begin{theorem}
For every $n,d,k \in \N$, and $\eps, \delta, \beta \in (0,1)$, there is an $(\eps, \delta)$-differentially private algorithm that takes as input a dataset $X \in \pmo^{n \times d}$, and with probability at least $1-\beta$, outputs an $\alpha$-accurate top-$k$ vector of $X$, for
$$
\alpha = O\left(\frac{\sqrt{k \cdot \ln(1/\delta)} \cdot \ln(kd/\beta)}{\eps n}\right)
$$
\end{theorem}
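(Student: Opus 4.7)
The plan is to implement the standard ``peeling'' mechanism: iteratively apply the exponential mechanism $k$ times, each time selecting a column of large empirical marginal from among those not yet chosen, and then combine the per-round privacy guarantees via advanced composition.

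In more detail, I would set up the exponential mechanism at round $i \in [k]$ with range equal to the surviving columns $[d] \setminus S_{i-1}$, where $S_{i-1}$ denotes the set of columns chosen in rounds $1,\dots,i-1$, and with score function equal to the marginal $q_j(X)$. Since each entry of a row lies in $\pmo$, changing one row shifts each marginal by at most $2/n$, so the score has global sensitivity $\Delta = 2/n$. The standard utility bound for the exponential mechanism then guarantees that, with probability at least $1 - \beta/k$, the column chosen at round $i$ has marginal at least $\max_{j \notin S_{i-1}} q_j(X) - O(\ln(dk/\beta)/(\eps_0 n))$, where $\eps_0$ is the per-round privacy parameter.

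Next I would convert this per-round guarantee into a bound against the global threshold $q_{(k)}(X)$. At any round $i \leq k$, at most $i - 1 \leq k - 1$ of the true top-$k$ columns have been previously selected, so at least one of them still lies in $[d] \setminus S_{i-1}$; consequently $\max_{j \notin S_{i-1}} q_j(X) \geq q_{(k)}(X)$. Union-bounding the single-round utility statement over all $k$ rounds, every selected column $j$ satisfies $q_j(X) \geq q_{(k)}(X) - \alpha$ with probability at least $1 - \beta$, where $\alpha = O(\ln(dk/\beta)/(\eps_0 n))$.

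Finally, I would invoke advanced composition: the composition of $k$ mechanisms each of which is $(\eps_0, 0)$-differentially private is $(\eps, \delta)$-differentially private provided $\eps_0 = O(\eps/\sqrt{k \ln(1/\delta)})$. Substituting this value of $\eps_0$ into the expression for $\alpha$ yields the claimed bound
$$\alpha = O\!\left(\frac{\sqrt{k \ln(1/\delta)} \cdot \ln(kd/\beta)}{\eps n}\right).$$
There is no genuine obstacle here since every ingredient is classical; the only care needed is that although the range of the exponential mechanism varies with $i$ (it depends on previously chosen columns), the sensitivity of $q_j(X)$ is a worst-case quantity that is independent of the range, so the per-round utility and privacy analyses carry over uniformly, and the overall algorithm remains adaptive-composition friendly.
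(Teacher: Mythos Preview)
Your proposal is correct and matches the approach the paper indicates: the paper does not actually prove this theorem but states it as ``an easy consequence of the exponential mechanism and composition theorems for differential privacy,'' which is precisely the peeling-plus-advanced-composition argument you outline. Your sensitivity, utility, and composition calculations are all standard and accurate, and your observation that at each round at least one true top-$k$ column remains available is exactly the right way to tie the per-round guarantee to the global threshold $q_{(k)}(X)$.
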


%\begin{definition}[Exact top-k vector]
%The top-k vector of a dataset $X \in \ds$ is a vector in $\{0,1\}^d$ such that $t_i(X) = 1$ iff $i \in \{c_{(1)}(X), \ldots , c_{(k)}(X)\}$. 
%\end{definition}

\subsection{Tracing Attacks}

Intuitively, tracing attacks violate differential privacy because if the target individual $y$ is outside the dataset, then $\cA(y, M(X))$ reports $\OUT$ with high probability, whereas if $y$ were added to the dataset to obtain $X'$, $\cA(y, M(X'))$ reports $\IN$ with high probability.  Therefore $M(X), M(X')$ must have very different distributions, which implies that $M$ is not differentially private.  The next lemma formalizes and quantifies this property.
\begin{lem}[Tracing Violates DP] \label{lem:tracingvsdp}
Let $M \from \pmo^{n \times d} \to \cR$ be a (possibly randomized) algorithm.  Suppose there exists an algorithm $\cA \from \{\pm 1\}^d \times \cR \rightarrow \set{\IN, \OUT}$ such that when $X \sim \pmo^{n \times d}$ and $y \sim \pmo^{d}$ are independent and uniformly random,
\begin{enumerate}
\item (Soundness) $\p{\cA(y, M(X)) = \IN} \leq \rho$
\item (Completeness) $\p{\#\set{ i \mid \cA(x_i, M(X)) = \IN} \geq n-m} \geq 1-\rho$.
\end{enumerate}
Then $M$ is not $(\eps, \delta)$-differentially private for any $\eps, \delta$ such that $e^{\eps}\rho + \delta < 1 - \rho - \frac{m}{n}$.  If $\rho < 1/4$ and $m > 3n/4$, then there are absolute constants $\eps_0, \delta_0 > 0$ such that $M$ is not $(\eps_0, \delta_0)$-differentially private.
\end{lem}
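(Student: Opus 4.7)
The plan is to play the two hypotheses against each other by constructing, for each $i \in [n]$, a neighboring pair of datasets $(X, X^{(i)})$ on which soundness and completeness give contradictory predictions once $(\eps,\delta)$-DP is assumed. Concretely, let $X^{(i)} = (x_1, \ldots, x_{i-1}, y, x_{i+1}, \ldots, x_n)$ be obtained from $X$ by overwriting its $i$-th row with the target $y$. Because $y$ and the $x_j$'s are i.i.d.\ uniform, $X^{(i)}$ is itself uniform on $\pmo^{n \times d}$, while $X$ and $X^{(i)}$ differ in exactly the $i$-th row, so the DP guarantee applies to this pair.

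\textbf{Upper bound via soundness.} For any fixed $y$, the set $S_y = \{ r : \cA(y,r) = \IN\}$ is a deterministic event in $\cR$. Conditioning on $(X,y)$ and applying $(\eps,\delta)$-DP to the neighboring pair $(X, X^{(i)})$, then taking expectation over $(X,y)$, yields
\[
\p{\cA(y, M(X^{(i)})) = \IN} \;\leq\; e^{\eps}\, \p{\cA(y, M(X)) = \IN} + \delta \;\leq\; e^{\eps}\rho + \delta,
\]
where the last step is the soundness hypothesis (here we crucially use that $X$ and $y$ are independent, so the soundness probability computed on the pair $(y, M(X))$ really does appear on the right).

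\textbf{Lower bound via completeness and exchangeability.} Since $X^{(i)}$ is uniformly distributed, applying completeness to the dataset $X^{(i)}$ gives
\[
\E\!\brac{\#\set{j : \cA((X^{(i)})_j, M(X^{(i)})) = \IN}} \;\geq\; (n-m)(1-\rho).
\]
The rows of $X^{(i)}$ are exchangeable and $M, \cA$ are row-symmetric inputs, so each of the $n$ summands $\p{\cA((X^{(i)})_j, M(X^{(i)})) = \IN}$ has the same value, and each is therefore at least $(1 - m/n)(1-\rho) \geq 1 - \rho - m/n$. Specializing to $j = i$, for which $(X^{(i)})_i = y$, gives $\p{\cA(y, M(X^{(i)})) = \IN} \geq 1 - \rho - m/n$.

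Combining the two displays yields $e^{\eps}\rho + \delta \geq 1 - \rho - m/n$, contradicting the lemma's hypothesis and establishing the first statement. The second statement follows by substituting the numerical assumptions on $\rho$ and $m/n$ and choosing $\eps_0,\delta_0$ small enough that $e^{\eps_0}\rho + \delta_0$ stays strictly below $1 - \rho - m/n$. There is no real obstacle here beyond the setup: the only care needed is in the symmetric coupling (to keep $X^{(i)}$ uniform so completeness applies, while still being a DP-neighbor of $X$) and in correctly handling the fact that the event used against DP depends on $y$, which is resolved by conditioning on $y$ before invoking DP.
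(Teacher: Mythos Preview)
Your overall architecture is the right one, and the upper bound via soundness is clean. The gap is in the exchangeability step of the lower bound. You assert that ``each of the $n$ summands $\p{\cA((X^{(i)})_j, M(X^{(i)})) = \IN}$ has the same value'' because the rows of $X^{(i)}$ are exchangeable. But exchangeability of the rows only gives you that permuting them leaves the \emph{distribution of the matrix} invariant; it does not make $(Z_j, M(Z))$ and $(Z_{j'}, M(Z))$ identically distributed unless $M$ itself is invariant under row permutations. The lemma makes no such assumption on $M$ (its domain is $\pmo^{n \times d}$, an ordered matrix), so for a row-sensitive $M$ the summands can genuinely differ and you cannot single out $j=i$.

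The fix is simple and keeps your coupling intact: randomize the replaced index. Let $I$ be uniform on $[n]$, independent of $(X,y)$. Your upper bound holds for each fixed $i$, hence also after averaging:
\[
\p{\cA(y, M(X^{(I)})) = \IN} \;\le\; e^{\eps}\rho + \delta.
\]
For the lower bound, observe that $(X^{(I)}, I)$ has the same law as $(Z,J)$ with $Z$ uniform on $\pmo^{n\times d}$ and $J$ uniform on $[n]$ independent of $Z$, and that $y = (X^{(I)})_I$. Hence
\[
\p{\cA(y, M(X^{(I)})) = \IN}
= \frac{1}{n}\sum_{j=1}^{n}\p{\cA(Z_j, M(Z)) = \IN}
= \frac{1}{n}\,\E\!\left[\#\{j:\cA(Z_j,M(Z))=\IN\}\right]
\ge \Bigl(1-\tfrac{m}{n}\Bigr)(1-\rho),
\]
using completeness in the last step. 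This yields $e^{\eps}\rho+\delta \ge 1-\rho-\tfrac{m}{n}$ exactly as you wanted, with no symmetry assumption on $M$. (The paper states the lemma without proof; the argument above is the standard one.)
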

The constants $\eps_0, \delta_0$ can be made arbitrarily close to $1$ by setting $\rho$ and $m/n$ to be appropriately small constants.  Typically differentially private algorithms typically satisfy $(\eps, \delta)$-differential privacy where $\eps = o(1), \delta = o(1/n)$, so ruling out differential privacy with constant $(\eps, \delta)$ is a strong lower bound.

\subsection{Probabilistic Inequalities}
We will make frequent use of the following concentration and anticoncentration results for sums of independent random variables.
\begin{lemma}[Hoeffding Bound]\label{Hoeffding}
Let $Z_1, \ldots , Z_n$ be independent random variables supported on $\pmo$, and let $Z = \frac{1}{n} \sum_{i=1}^{n} Z_i$.  Then
\begin{align*}
\forall \nu > 0~~~~~\prob\brac{Z - \E[Z] \geq \nu} \leq e^{-\frac{1}{2}\nu^2 n}.
\end{align*}
\end{lemma}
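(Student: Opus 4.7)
The plan is to give the standard Chernoff-style proof via the moment generating function (MGF), since the statement is just a variance-$1$ form of Hoeffding's inequality for $\pmo$-valued summands. First I would center the variables, writing $W_i = Z_i - \E[Z_i]$, so each $W_i$ is mean-zero and lives in an interval of length $2$. The goal becomes showing $\prob[\tsum_{i=1}^n W_i \geq n\nu] \leq e^{-\nu^2 n / 2}$.

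Next I would apply the exponential Markov inequality: for any $t > 0$,
\begin{equation*}
\prob\brac{\tsum_i W_i \geq n\nu} \;=\; \prob\brac{e^{t \sum_i W_i} \geq e^{tn\nu}} \;\leq\; e^{-tn\nu}\,\E\brac{e^{t\sum_i W_i}} \;=\; e^{-tn\nu}\prod_i \E\brac{e^{t W_i}},
\end{equation*}
where the last equality uses independence of the $W_i$. The main substantive step is bounding the individual MGFs. For this I would prove the standard Hoeffding lemma: if a random variable $W$ is mean zero and supported in $[a,b]$, then $\E[e^{tW}] \leq e^{t^2 (b-a)^2 / 8}$. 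The short argument uses convexity of $x \mapsto e^{tx}$ to write $e^{tW} \leq \frac{b-W}{b-a} e^{ta} + \frac{W-a}{b-a} e^{tb}$, take expectations (the $W$-term vanishes because $\E[W]=0$), and bound the resulting function of $t$ by $e^{t^2(b-a)^2/8}$ via a Taylor expansion of its logarithm. In our case $W_i \in [-1-\E[Z_i], 1-\E[Z_i]]$, so $b-a = 2$ and we get $\E[e^{t W_i}] \leq e^{t^2/2}$.

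Plugging back in gives $\prob[\tsum_i W_i \geq n\nu] \leq e^{-tn\nu + n t^2 / 2}$. Optimizing the quadratic in $t$ by choosing $t = \nu$ yields the bound $e^{-n\nu^2/2}$, which is exactly the stated inequality. There is no genuine obstacle here; the only care point is the proof of Hoeffding's lemma itself, which is a short but not entirely trivial convexity-plus-Taylor computation, and the numerical constant $\tfrac{1}{8}$ there is what produces the $\tfrac{1}{2}$ in the exponent after the $(b-a)^2 = 4$ factor is applied.
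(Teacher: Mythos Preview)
Your proposal is correct; this is the standard Chernoff--MGF argument for Hoeffding's inequality and every step goes through as written. The paper itself does not prove this lemma at all---it is stated in the preliminaries as a standard probabilistic fact and used as a black box---so there is no ``paper's proof'' to compare against, but your proof would serve perfectly well if one were desired.
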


Hoeffding's bound on the upper tail also applies to random variables that are \emph{negative-dependent}, which in this case means that setting any set of the variables $B$ to $+1$ only makes the variables in $[n] \setminus B$ more likely to be $-1$~\cite{PS97}.  Similarly, if the random variables are \emph{positive-dependent} (their negations are negative-dependent), then Hoeffding's bound applies to the lower tail.

\begin{theorem}[Chernoff Bound]\label{Chernoff}
Let $Z_1, \ldots, Z_n$ be a sequence of independent $\zo$-valued random variables, let $Z = \sum_{i=1}^{n} Z_i$, and let $\mu = \E[Z]$.  Then
\begin{enumerate}
\item
(Upper Tail)~~~ $\forall \nu > 0~~~\prob\brac{Z \geq (1+\nu)\mu} \leq e^{-\frac{\nu^2}{2+\nu}\mu}$, and
\item 
(Lower Tail)~~~ $\forall \nu \in (0,1)~~~\prob\brac{Z \leq (1-\nu)\mu} \leq e^{- \frac{1}{2} \nu^2 \mu}.$
\end{enumerate}
\end{theorem}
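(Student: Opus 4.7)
The plan is to apply the standard Chernoff (moment generating function) method to both tails. For the upper tail I fix $\nu > 0$ and a free parameter $t > 0$ to be optimized at the end, and apply Markov's inequality to the non-negative random variable $e^{tZ}$:
$$\prob\brac{Z \geq (1+\nu)\mu} = \prob\brac{e^{tZ} \geq e^{t(1+\nu)\mu}} \leq \frac{\E[e^{tZ}]}{e^{t(1+\nu)\mu}}.$$
Using independence of the $Z_i$ the moment generating function factors as $\E[e^{tZ}] = \prod_{i=1}^{n}\E[e^{tZ_i}]$. For each Bernoulli $Z_i$ with $\E[Z_i] = p_i$, I bound $\E[e^{tZ_i}] = 1 + p_i(e^t - 1) \leq e^{p_i(e^t - 1)}$ using $1+x \leq e^x$. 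Multiplying these bounds gives $\E[e^{tZ}] \leq e^{\mu(e^t - 1)}$, so the tail probability is at most $\exp\bigl(\mu(e^t - 1) - t(1+\nu)\mu\bigr)$.

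Next I would optimize by setting $t = \ln(1+\nu)$, which is the unique minimizer, yielding the exponent $-\mu\bigl((1+\nu)\ln(1+\nu) - \nu\bigr)$. Matching this to the clean polynomial form in the theorem statement requires the one-variable inequality $(1+\nu)\ln(1+\nu) - \nu \geq \nu^2/(2+\nu)$ for all $\nu > 0$, which is verified by noting that both sides vanish at $\nu = 0$ and comparing derivatives. For the lower tail I would run the parallel argument with $t < 0$ (equivalently, apply the upper-tail analysis to the $\zo$-valued variables $1 - Z_i$ and the shifted deviation $(1-\nu)\mu$). The same chain of inequalities yields $\prob\brac{Z \leq (1-\nu)\mu} \leq \exp\bigl(-\mu((1-\nu)\ln(1-\nu) + \nu)\bigr)$, and the companion inequality $(1-\nu)\ln(1-\nu) + \nu \geq \nu^2/2$ for $\nu \in (0,1)$ delivers the stated bound.

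I do not expect any step to be a serious obstacle, since this is a textbook calculation; the only two substantive choices are picking $t = \ln(1+\nu)$ (and its lower-tail analogue), and invoking the two standard log inequalities at the end. If anything is a little delicate, it is making sure that the exponent $\nu^2/(2+\nu)$ in the upper tail, rather than a weaker constant like $\nu^2/3$, is actually what the optimized Chernoff exponent collapses to.
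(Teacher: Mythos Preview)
Your proof is correct and is exactly the standard textbook Chernoff argument. The paper does not prove this theorem at all; it is stated without proof in the Preliminaries section as a known probabilistic inequality, so there is nothing to compare against.
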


\begin{theorem}[Anticoncentration~\cite{LedouxT13}]\label{anti-conc} 
Let $Z_1,\dots,Z_n$ be independent and uniform in $\pmo$, and let $Z = \frac{1}{n} \sum_{i=1}^{n} Z_i$.  Then for every $\beta > 0$, there exists $K_{\beta} > 1$ such that for every $n \in \N$,
$$
\forall v \in \left[\frac{K_{\beta}}{\sqrt{n}} , \frac{1}{K_{\beta}}\right]~~~~~\mathbb{P}\left[Z \geq \nu\right] \geq e^{-\frac{1+\beta}{2} \nu^2 n}.
$$
\end{theorem}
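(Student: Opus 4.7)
The plan is to prove this anticoncentration bound by an \emph{exponential tilting} (change-of-measure) argument, which is the standard Cram\'er-type route for sharp lower tails of sums of bounded i.i.d.\ random variables. Write $S_n = \sum_{i=1}^{n} Z_i = n Z$, so that the Rademacher sum has log moment generating function $\Lambda(\theta) = \log \cosh(\theta)$ and Legendre transform (rate function)
\begin{equation*}
I(\nu) \;=\; \sup_{\theta} \paren{\theta \nu - \Lambda(\theta)} \;=\; \tfrac{1}{2}(1+\nu)\log(1+\nu) + \tfrac{1}{2}(1-\nu)\log(1-\nu),
\end{equation*}
whose Taylor expansion around $0$ gives $I(\nu) = \tfrac{1}{2}\nu^2 + O(\nu^4)$. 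So first I would choose $K_\beta$ large enough that whenever $\nu \leq 1/K_\beta$ we have $I(\nu) \leq \tfrac{1+\beta/2}{2}\nu^2$; this takes care of the dominant exponential factor modulo a small slack of order $\tfrac{\beta}{4}\nu^2$.

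Next I would introduce the tilted measure $Q$ under which the $Z_i$ are i.i.d.\ with $\p{Z_i = +1} = (1+\nu)/2$, so that $\E_Q[Z] = \nu$ and $\var{Q}{Z} = (1-\nu^2)/n$. By change of measure,
\begin{equation*}
\p{Z \geq \nu} \;=\; \ex{Q}{\prod_{i=1}^{n}\frac{1}{1+\nu Z_i}\cdot \ind{Z \geq \nu}} \;\geq\; \ex{Q}{\prod_{i=1}^{n}\frac{1}{1+\nu Z_i}\cdot \ind{\nu \leq Z \leq \nu + c/\sqrt{n}}}
\end{equation*}
for any $c > 0$. Letting $N_\pm$ denote the number of $\pm 1$'s, I would use the identity $\sum_i \log(1+\nu Z_i) = N_+\log(1+\nu) + N_- \log(1-\nu)$, rewrite this in terms of $Z = (N_+ - N_-)/n$, and observe that on the event $\set{\nu \leq Z \leq \nu + c/\sqrt{n}}$,
\begin{equation*}
\sum_{i=1}^{n}\log(1+\nu Z_i) \;=\; n I(\nu) + n(Z-\nu)\cdot \operatorname{arctanh}(\nu) \;\leq\; n I(\nu) + c\sqrt{n}\,\nu,
\end{equation*}
using $\operatorname{arctanh}(\nu) \leq 2\nu$ for $\nu < 1/2$. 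Therefore the Radon--Nikodym factor is at least $\exp\paren{-nI(\nu) - 2c\sqrt{n}\,\nu}$ throughout this event.

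Finally I would show that the probability of the thin slab $\set{\nu \leq Z \leq \nu + c/\sqrt{n}}$ under $Q$ is bounded below by an absolute constant $p_0 > 0$: under $Q$, $\sqrt{n}(Z-\nu)/\sqrt{1-\nu^2}$ is a sum of $n$ i.i.d.\ zero-mean bounded variables with unit variance, so by the Berry--Esseen theorem (applied for $\nu \leq 1/K_\beta$, so the third moment stays bounded), it lies in $[0, c/\sqrt{1-\nu^2}]$ with probability at least, say, $1/4$ for a suitable absolute constant $c$. Combining these bounds yields
\begin{equation*}
\p{Z \geq \nu} \;\geq\; p_0 \cdot \exp\paren{-nI(\nu) - 2c\sqrt{n}\,\nu} \;\geq\; \exp\paren{-\tfrac{1+\beta/2}{2}\nu^2 n - 2c\sqrt{n}\,\nu - \log(1/p_0)}.
\end{equation*}
To finish, I would enlarge $K_\beta$ one last time so that whenever $\nu \geq K_\beta/\sqrt{n}$, the two correction terms $2c\sqrt{n}\,\nu$ and $\log(1/p_0)$ are both at most $\tfrac{\beta/2}{4}\nu^2 n$; this is possible because $\sqrt{n}\,\nu \geq K_\beta$ makes $2c\sqrt{n}\,\nu \leq (2c/K_\beta)\cdot \nu^2 n$ and $n\nu^2 \geq K_\beta^2$ absorbs the additive constant.

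\paragraph{Main obstacle.}
The conceptually routine part is the tilting identity; the main quantitative nuisance will be juggling the three sources of slack---the Taylor remainder in $I(\nu) - \tfrac{\nu^2}{2}$, the first-order correction $n(Z-\nu)\operatorname{arctanh}(\nu)$ on the slab, and the constant $\log(1/p_0)$ from the Berry--Esseen step---so that all three fit inside the $\beta$ margin simultaneously for a \emph{single} choice of $K_\beta$. The cleanest way I see to manage this is, as above, to split the $\beta$ budget into two halves and pick $K_\beta$ in two stages: one choice to bound $I(\nu)$ by $\tfrac{1+\beta/2}{2}\nu^2$, then a further enlargement to dominate the lower-order correction terms using $n\nu^2 \geq K_\beta^2$.
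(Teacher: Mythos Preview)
The paper does not prove this statement: it is quoted as a preliminary fact with a citation to~\cite{LedouxT13}, alongside the Hoeffding and Chernoff bounds, and no argument is given. So there is no ``paper's proof'' to compare against.

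That said, your proposal is the standard and correct route. The exponential-tilting (Cram\'er) argument is exactly how such sharp anticoncentration bounds are obtained: tilt to the measure with mean $\nu$, control the Radon--Nikodym derivative on a slab of width $O(1/\sqrt{n})$ around $\nu$, and use Berry--Esseen (or even just the ordinary CLT together with a uniform integrability argument, since you are free to take $n \geq K_\beta^4$) to lower-bound the slab probability under $Q$. Your bookkeeping is right: the identity $\sum_i \log(1+\nu Z_i) = nI(\nu) + n(Z-\nu)\operatorname{arctanh}(\nu)$ is exact, the Taylor estimate $I(\nu) \leq \tfrac{1}{2}(1+\beta/2)\nu^2$ for $\nu$ small is immediate from $I(\nu) = \tfrac{\nu^2}{2} + \tfrac{\nu^4}{12} + \cdots$, and the two lower-order corrections $2c\sqrt{n}\,\nu$ and $\log(1/p_0)$ are both $o(n\nu^2)$ once $n\nu^2 \geq K_\beta^2$, so your two-stage choice of $K_\beta$ works. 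One cosmetic slip: in the displayed bound on the slab you wrote $c\sqrt{n}\,\nu$ but then (correctly) carried $2c\sqrt{n}\,\nu$ forward; the factor of $2$ comes from $\operatorname{arctanh}(\nu) \leq 2\nu$.
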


\section{Tracing Using the Top-k Vector} \label{sec:exacttracing}

Given a (possibly approximate) top-$k$ vector $t$ of a dataset $X$, and a target individual $y$, we define the following \emph{inner product attack}.

\begin{center}
\fbox{ \label{alg:attack}
	\parbox{11 cm}{
		\begin{align*}
		\mathcal{A}_{\rho , d, k}(y, t):
		\end{align*}				
		
		\begin{enumerate}
		\item[] 		
		Input: $y \in \{\pm 1\}^d$ and $t \in \zo^d$.  Let $\tau = \sqrt{2k \ln(1/\rho)}.$
		\item[] 
		If $\iprod{y}{t} > \tau$, output IN; else output OUT.
		\end{enumerate}
	}
}
\end{center}
\bigskip

In this section we will analyze this attack when $X \in \pmo^{n \times d}$ is a uniformly random matrix, and $t = t(X)$ is the exact top-$k$ vector of $X$. %In Section~\ref{sec:noisytopk} we will extend our analysis to the case where $t$ is some approximate top-$k$ vector of $X$, which gives our lower bound for differential privacy in Theorem~\ref{thm:main-lb}.  
In this case, we have the following theorem.
\begin{theorem} \label{thm:tracingexact}
There is a universal constant $C \in (0,1)$ such that if $\rho > 0$ is any parameter and $n,d,k \in \N$ satisfy $d \leq 2^{C n}$, $k \leq C d$ and $k\ln(d/2k) \geq 8n\ln(1/\rho)$, then $\mathcal{A}_{\rho,d,k}$ has the following properties:
If $X \sim \pmo^{n \times d}, y \sim \pmo^{d}$ are independent and uniform, and $t(X)$ is the exact top-k vector of $X$, then
\begin{enumerate}
\item (Soundness)~~
$\prob \brac{\mathcal{A}_{\rho,d,k}(y,t(X)) = \mathrm{IN}} \leq \rho $, and 
\item (Completeness)~~
for every $i \in [n]$,~$\prob \brac{ \mathcal{A}_{\rho,d,k}(x_i,t(X)) = \mathrm{OUT} } < \rho + e^{-k/4}$.
\end{enumerate}
\end{theorem}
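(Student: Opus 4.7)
The plan is to prove soundness and completeness separately, each as a concentration inequality for a sum of (conditionally) independent variables.

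Soundness follows immediately from independence: $y$ is independent of $X$, hence of $t(X)$, so for every fixed $t \in \{0,1\}^d$ of Hamming weight $k$, $\langle y, t \rangle = \sum_{j : t_j = 1} y_j$ is a sum of $k$ independent uniform $\pm 1$ variables. Hoeffding's inequality (Lemma~\ref{Hoeffding}) then gives $\prob[\langle y, t \rangle > \tau] \leq e^{-\tau^2/(2k)} = \rho$ by the choice of $\tau = \sqrt{2k\ln(1/\rho)}$, and averaging over $t(X)$ delivers the soundness claim.

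For completeness I would proceed in two steps. First, I would show that with probability at least $1 - e^{-k/4}$ over $X$, every top-$k$ marginal of $X$ exceeds $\gamma := \sqrt{\ln(d/2k)/n}$. Applying the anticoncentration bound (Theorem~\ref{anti-conc}) with $\beta = 1$, and picking the universal constant $C$ small enough that $\gamma$ lies in the required range $[K_1/\sqrt{n}, 1/K_1]$ whenever $k \leq Cd$ and $d \leq 2^{Cn}$, gives $\prob[q_j(X) \geq \gamma] \geq 2k/d$ for each $j$. Since the columns of $X$ are independent, Chernoff's lower tail (Theorem~\ref{Chernoff}) applied to the count $N := |\{j : q_j(X) \geq \gamma\}|$, whose mean is at least $2k$, then yields $\prob[N < k] \leq e^{-k/4}$; on this event every column selected by $t(X)$ has marginal at least $\gamma$. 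Second, I would bound $\prob[\langle x_i, t(X) \rangle \leq \tau \mid q(X)]$ for each fixed row $i$. The key observation is that because $X$ is uniform on $\pmo^{n \times d}$, conditioning on $q(X)$ leaves the columns mutually independent (each column is uniform on $\pmo^n$ subject to its fixed sum), and by exchangeability within each column $\mathbb{E}[x_{i,j} \mid q(X)] = q_j(X)$. Since $t(X)$ is determined by $q(X)$, the variables $\{x_{i,j}\}_{j : t(X)_j = 1}$ are independent $\pm 1$-valued with conditional means $q_j(X)$, and applying Hoeffding's inequality to the negated variables gives $\prob[\sum_{j : t(X)_j = 1}(x_{i,j} - q_j(X)) \leq -\tau \mid q(X)] \leq \rho$. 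On the good event from the first step, $\sum_{j : t(X)_j = 1} q_j(X) \geq k\gamma$, and the hypothesis $k\ln(d/2k) \geq 8n\ln(1/\rho)$ is calibrated precisely so that $k\gamma \geq 2\tau$, so $\langle x_i, t(X) \rangle \geq k\gamma - \tau > \tau$ with conditional probability at least $1 - \rho$. A union bound over the two failure events delivers the bound $\rho + e^{-k/4}$.

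The main obstacle is the calibration of $\gamma$: it must be small enough that anticoncentration produces per-column probability at least $2k/d$, yet large enough that $k\gamma > 2\tau$. The hypothesis $k\ln(d/2k) \geq 8n\ln(1/\rho)$ is tight for precisely this balance, and verifying the range condition $\gamma \in [K_1/\sqrt{n}, 1/K_1]$ required by Theorem~\ref{anti-conc} is what forces the assumptions $d \leq 2^{Cn}$ and $k \leq Cd$.
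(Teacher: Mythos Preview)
Your proof is correct and follows the same overall structure as the paper: soundness via Hoeffding on the independent target $y$, and completeness via (i) anticoncentration plus Chernoff to show $q_{(k)}(X)\ge\gamma$ except with probability $e^{-k/4}$, then (ii) Hoeffding on the $k$ selected coordinates of $x_i$.

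The one place you diverge is in the conditioning used for step (ii). The paper conditions on the triple $(q_{(k+1)},\pi(k+1),t(X))$ and needs a separate claim (Claim~\ref{clm:independence}) to argue that the top-$k$ columns remain mutually independent under this conditioning, with some care about the lexicographic tie-breaking rule. You instead condition on the entire marginal vector $q(X)$. This is a genuine simplification: since each $q_j$ is a function of column $j$ alone, conditioning on $q(X)$ trivially preserves column independence, and it simultaneously fixes $t(X)$ and gives $\mathbb{E}[x_{i,j}\mid q(X)]=q_j(X)$ by exchangeability within a column. Your route thus avoids the tie-handling argument entirely at no cost.
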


\begin{comment}
\begin{proof}
Combining the statements of Lemmas~\ref{sound_top} and \ref{comp_top}, we see that our algorithm traces with a probability of $\rho$ if,

\begin{align*}
k\sqrt{\frac{\ln(d/k)}{n}} - \sqrt{ 2k\ln (2/\rho)} > \sqrt{ 2k\ln (1/\rho)}
\end{align*}

or equivalently when,

$$k \ln(d/k) > (8\ln 2)n\ln (1/\rho).$$

\end{proof}
\end{comment}

We will prove the soundness and completeness properties separately in Lemmas~\ref{sound_top} and~\ref{comp_top}, respectively.  The proof of soundness is straightforward.
\begin{lemma}[Soundness]\label{sound_top} 
	For every $\rho > 0$, $n \in \N$, and $k \leq d \in \N$, if $X \sim \pmo^{n \times d}, y \sim \pmo^{d}$ are independent and uniformly random, and $\vect(X)$ is the exact top-$k$ vector, then
	\begin{align*}
		\mathbb{P}\brac{\iprod{y}{t(X)}  \geq \sqrt{ 2k\ln (1/\rho)}} \leq \rho. 
	\end{align*}
\end{lemma}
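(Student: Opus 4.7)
The plan is to condition on $X$ and use independence of $y$ from $X$ to reduce the problem to a Hoeffding bound on a sum of $k$ independent Rademacher variables.

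First I would observe that once $X$ is fixed, $t(X) \in \{0,1\}^d$ is a determined vector with exactly $k$ non-zero coordinates. Writing $S = \{j : t(X)_j = 1\}$ (of cardinality $k$), we have
$$
\iprod{y}{t(X)} = \sum_{j \in S} y_j.
$$
Since $y$ is independent of $X$ and each $y_j$ is uniform on $\pmo$, the variables $\{y_j\}_{j \in S}$ are independent uniform $\pmo$ random variables \emph{even after conditioning on $X$} (hence on $S$). In particular, $\mathbb{E}[y_j] = 0$ for each $j$, so $\mathbb{E}[\iprod{y}{t(X)} \mid X] = 0$.

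Next I would apply Hoeffding's inequality (Lemma~\ref{Hoeffding}) to the average $Z = \tfrac{1}{k} \sum_{j \in S} y_j$. With $\nu = \sqrt{2 \ln(1/\rho)/k}$, Hoeffding gives
$$
\mathbb{P}\brac{\iprod{y}{t(X)} \geq \sqrt{2k\ln(1/\rho)} \; \big| \; X} = \mathbb{P}\brac{Z \geq \nu \; \big| \; X} \leq e^{-\nu^2 k / 2} = \rho.
$$
This bound is uniform in $X$, so taking expectation over $X$ preserves it and yields the claimed unconditional bound.

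There is no real obstacle here; the only subtlety is the observation that independence of $y$ from $X$ means conditioning on $X$ (and therefore on the random set $S$ of top-$k$ coordinates) does not alter the joint distribution of the $y_j$'s. This is exactly what lets a fixed-$t$ Hoeffding bound transfer to the random $t(X)$ case.
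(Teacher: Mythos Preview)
Your proof is correct and is essentially the same as the paper's: both condition so that the top-$k$ set is fixed, use independence of $y$ from $X$ to retain independent uniform $\pmo$ coordinates, and apply Hoeffding's inequality. The paper phrases the conditioning as a sum over possible top-$k$ sets $T$ (conditioning on $t(X)=\mathbb{I}_T$) rather than on all of $X$, but this is only a cosmetic difference.
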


\begin{proof}
	%Since $\vect(X)$ and $\vecy$ are independent, we get that,
	%\begin{itemize}
	%\item 
	%$\E[\iprod{\vect(X)}{y}] = 0$
	%\item 
	%$\underset{\vecx \in D^n,\vecy \in D}{\prob}\brac{\iprod{\vect(\vecx)}{y} \geq \tau} = \underset{\vecy \in D}{\prob}\brac{\underset{i \in [k]}{\sum}y_i \geq \tau}$
	%\end{itemize}
	%
	%We also have that,
	
	Recall that $\tau := \sqrt{2k\ln(1/\rho)}$.  Since $X,y$ are independent, we have
	
	%Since $X,y$ are independent, for a fixed $\vect(X)$ the failure probability, $\underset{X,\vecy }{\prob}\brac{\iprod{\vect(X)}{\vecy} \geq \tau_s \given \vect(X)=\ii_T}$ only depends on $\vecy$. We can then average this failure probability over the choices of $\vect(X)$. Formally we have that,
	
	\begin{align*}
		&\underset{X,y }{\prob}\brac{\iprod{y}{t(X)} \geq \tau} \\
		= \sum_{T \subseteq [d] : |T|=k}~&\underset{X, y}{\prob}\brac{\iprod{y}{t(X)} \geq \tau \; \big | \; t(X) = \ii_T} \cdot \pr{X}{t(X) = \ii_T}\\
		=  \sum_{T \subseteq [d] : |T|=k}~&\underset{y}{\prob}\left[\sum_{j \in T} y_j \geq \tau \right] \cdot \pr{X}{t(X) = \ii_T} \tag{$X, y$ are independent}\\
	\leq \max_{T \subseteq [d] : |T|=k}~&\underset{y}{\prob}\left[\sum_{j \in T} y_j \geq \tau \right]\\
	\end{align*}

	\noindent For every fixed $T$, the random variables $\{y_j\}_{j \in T}$ are independent and uniform on $\pmo$, so by Hoeffding's inequality,
	$${\p{ \sum_{j \in T} y_j \geq \sqrt{ 2k\ln (1/\rho)}} \leq \rho}.$$ This completes the proof of the lemma.
\end{proof}

\jon{Changed the constant from $\kappa$ to $C$ to be consistent with the noisy section.}
We now turn to proving the completeness property, which will following immediately from the following lemma.
\begin{lemma}[Completeness]\label{comp_top}
There is a universal constant $C \in (0,1)$ such that for every $\rho > 0$, $n \in \N$, $d \leq 2^{C n}$, and $k \leq C d$, if $X \sim \pmo^{n \times d}$ is chosen uniformly at random, $t(X)$ is the exact top-$k$ vector, and $x_i$ is any row of $X$,
\begin{align*}
\prob\brac{\iprod{x_i}{t(X)}  \leq k\sqrt{\frac{\ln(d/2k)}{n}} - \sqrt{ 2k\ln (1/\rho)} } \leq \rho + e^{-k/4}.
\end{align*}
\end{lemma}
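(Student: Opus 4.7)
The plan has two parts: an anticoncentration argument establishing a good event on the $k$-th largest marginal, and a Hoeffding-type concentration bound on the inner product that handles the coupling between $t(X)$ and $x_i$.

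Set $\gamma := \sqrt{\ln(d/2k)/n}$ and let $E := \{q_{(k)}(X) \ge \gamma\}$. I first show $\p{E} \ge 1 - e^{-k/4}$. Since $X$ is uniform, its columns are i.i.d., so the marginals $q_1,\ldots,q_d$ are i.i.d.\ averages of $n$ independent uniform $\pm 1$ variables. Applying Theorem~\ref{anti-conc} with a sufficiently small $\beta > 0$ (whose admissibility determines the absolute constant $C$) yields $\p{q_j \ge \gamma} \ge (2k/d)^{(1+\beta)/2} \ge 2k/d$. Hence $N := |\{j : q_j \ge \gamma\}|$ is a binomial with mean at least $2k$, and Chernoff's lower tail (Theorem~\ref{Chernoff}) gives $\p{N < k} \le e^{-k/4}$. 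On $\{N \ge k\}$, every $j \in T := t(X)$ satisfies $q_j \ge \gamma$.

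On $E$, writing $s_j := \sum_{\ell \neq i} x_{\ell, j}$ and using $x_{i,j} = nq_j(X) - s_j$,
\[
\iprod{x_i}{t(X)} = n \sum_{j \in T} q_j - \sum_{j \in T} s_j \ge nk\gamma - \sum_{j \in T} s_j.
\]
The difficulty is controlling $\sum_{j \in T} s_j$, since $T$ depends on $x_i$. My key tool is the \emph{shadow} top-$k$ set $\tilde T$ defined as the top-$k$ of $\tilde q_j := s_j/(n-1)$; this set is a function of $x_{-i}$ alone, hence independent of $x_i$. A per-column case analysis using the common integer parity of the $s_j$'s, together with the bound $|nq_{(k)} - s_{(k)}| \le 1$, forces $x_{i,j} = +1$ for every $j \in T \setminus \tilde T$ and $x_{i,j} = -1$ for every $j \in \tilde T \setminus T$, yielding the exact identity
\[
\iprod{x_i}{t(X)} = \sum_{j \in \tilde T} x_{i,j} + 2|T \setminus \tilde T|.
\]
Because $\tilde T$ is independent of $x_i$ and $|\tilde T|=k$, Hoeffding (Lemma~\ref{Hoeffding}) gives $\sum_{j \in \tilde T} x_{i,j} \ge -\sqrt{2k \ln(1/\rho)}$ except with probability at most $\rho$.

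The main obstacle is to show $2|T \setminus \tilde T| \ge k\gamma$ on event $E$. My plan is to use the parity argument once more to deduce the refined bound $\sum_{j \in T} s_j \le \sum_{j \in \tilde T} s_j - 2|T \setminus \tilde T|$ (since any $j \in T \setminus \tilde T$ has $s_j \le s_{(k)} - 2$ while any $j \in \tilde T \setminus T$ has $s_j \ge s_{(k)}$), combine this with the earlier lower bound $\iprod{x_i}{t(X)} \ge nk\gamma - \sum_{j \in T} s_j$, and balance against the identity to isolate the required lower bound on the symmetric-difference size. Putting everything together then gives $\iprod{x_i}{t(X)} \ge k\gamma - \sqrt{2k\ln(1/\rho)}$ except on $E^c$ or the Hoeffding failure event, which together have probability at most $e^{-k/4} + \rho$. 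A delicate point throughout is the handling of ties at the top-$k$ threshold, which is why the lemma statement fixes $t(X)$ to be the lexicographically first top-$k$ vector.
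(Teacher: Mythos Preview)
Your first step---establishing $\p{q_{(k)}(X) \ge \gamma} \ge 1 - e^{-k/4}$ via anticoncentration and Chernoff---is correct and matches the paper (Claim~\ref{clm:topkbias}).

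The second step, however, has a genuine gap. The deterministic claim ``$2|T \setminus \tilde T| \ge k\gamma$ on the event $E$'' is simply false. Fix $x_{-i}$ so that the $k$ largest values among $s_1,\dots,s_d$ exceed all remaining $s_j$ by at least $3$; then regardless of $x_i$ we have $T = \tilde T$, hence $|T \setminus \tilde T| = 0$, yet $E$ can easily hold. On such realizations your identity gives only $\iprod{x_i}{t(X)} = \sum_{j \in \tilde T} x_{i,j}$, a sum of $k$ independent uniform $\pm 1$'s, which is centred at $0$ rather than $k\gamma$. Your proposed ``balancing'' cannot rescue this: both your identity and your refined inequality carry $|T \setminus \tilde T|$ with coefficient $+2$, so combining them eliminates nothing and produces no lower bound on the symmetric difference. (Concretely, the inequality $\iprod{x_i}{t(X)} \ge nk\gamma - \sum_{j \in \tilde T} s_j + 2|T \setminus \tilde T|$ is not new information once you know the identity, since $nk\gamma - \sum_{j\in\tilde T} s_j \le \sum_{j\in\tilde T} x_{i,j}$ is just a consequence of $n q_j \ge n\gamma$ for $j \in T$ rearranged.) The point is that the $k\gamma$ bias is \emph{not} carried by $|T\setminus\tilde T|$; it is carried by the conditional bias of each $x_{i,j}$ once you condition on $j$ being in the top-$k$.

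The paper avoids this difficulty by conditioning differently: rather than decoupling via the leave-one-out shadow set, it conditions on the triple $(q_{(k+1)}, \pi(k+1), t(X))$. Claim~\ref{clm:independence} shows that under this conditioning the columns of $X$ remain independent, and each column $\ell \in T$ is now a uniform $\pm 1$ vector conditioned on its mean being at least $q_{(k+1)} \ge \gamma$. Hence each $x_{i,\ell}$ for $\ell \in T$ has conditional mean at least $\gamma$, they are mutually independent, and a single application of Hoeffding gives $\p{\sum_{\ell \in T} x_{i,\ell} < k\gamma - \sqrt{2k\ln(1/\rho)}} \le \rho$. This is exactly the step your shadow-set approach cannot reproduce: the bias of $x_{i,j}$ toward $+1$ comes from conditioning on $q_j$ being large, not from any combinatorial relation between $T$ and $\tilde T$.
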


To see how the completeness property of Theorem~\ref{thm:tracingexact} follows from the lemma, observe if $k \ln(d/2k) \geq 8n\ln(1/\rho)$, then $k\sqrt{\ln(d/2k)/n} - \sqrt{2k\ln(1/\rho)} \geq \tau$.  Therefore Lemma~\ref{comp_top} implies that $\p{\langle x_i, t(X) \rangle < \tau} \leq \rho + e^{-k/4}$, so $\p{\mathcal{A}_{\rho, d, k}(x_i, t(X)) = \IN} \geq 1-\rho-e^{-k/4}$.

Before proving the lemma, we will need a few claims about the distribution of $\langle x_i, t(X) \rangle$.  The first claim asserts that, although $X \in \pmo^{n \times d}$ is uniform, the $k$ columns of $X$ with the largest marginals are significantly biased.
\begin{claim} \label{clm:topkbias}
There is a universal constant $C \in (0,1)$, such that for every $n \in \N$, $d \leq 2^{C n}$ and $k \leq C d$, if $X \in \pmo^{n \times d}$ is drawn uniformly at random, then
$$
\p{q_{(k)}(X) < \sqrt{\frac{\ln(d/2k)}{n}} }  \leq e^{-k/4}.
$$
\end{claim}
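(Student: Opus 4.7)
The plan is to reduce the claim to a counting statement about columns and then invoke the anticoncentration result Theorem~\ref{anti-conc} followed by the Chernoff lower tail Theorem~\ref{Chernoff}.  Set $\nu := \sqrt{\ln(d/2k)/n}$ and let $B_j := \ii\{q_j(X) \geq \nu\}$ for $j \in [d]$.  Because $X$ is uniform on $\pmo^{n \times d}$, the columns of $X$ are independent, so the $B_j$ are independent Bernoulli random variables.  The key observation is that $q_{(k)}(X) \geq \nu$ is equivalent to $\sum_{j \in [d]} B_j \geq k$, so it suffices to show that $\sum_j B_j \geq k$ with probability at least $1 - e^{-k/4}$.

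To lower-bound $\E[B_j]$, I will fix a small constant $\beta \in (0, 1)$ (say $\beta = 1/4$) and apply Theorem~\ref{anti-conc} with $K_\beta$ as in that statement.  Since $q_j(X) = \frac{1}{n} \sum_{i=1}^n X_{i,j}$ is a normalized sum of $n$ independent uniform $\pmo$ variables, the theorem yields $\prob[q_j(X) \geq \nu] \geq \exp(-(1+\beta)\nu^2 n/2) = (2k/d)^{(1+\beta)/2}$, provided $\nu$ lies in the interval $[K_\beta/\sqrt{n}, 1/K_\beta]$.  The upper endpoint requirement $\sqrt{\ln(d/2k)/n} \leq 1/K_\beta$ is exactly $\ln(d/2k) \leq n/K_\beta^2$, which is ensured by taking the constant $C$ in the hypothesis $d \leq 2^{Cn}$ to satisfy $C \leq 1/K_\beta^2$.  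The lower endpoint requirement $\sqrt{\ln(d/2k)/n} \geq K_\beta/\sqrt{n}$ amounts to $d/(2k) \geq e^{K_\beta^2}$, which is ensured by shrinking $C$ in the hypothesis $k \leq Cd$.  Summing over $j$ then gives $\mu := \E\bigl[\sum_{j} B_j\bigr] \geq d \cdot (2k/d)^{(1+\beta)/2} = (2k)^{(1+\beta)/2} d^{(1-\beta)/2}$, and this is at least $2k$ as soon as $d \geq 2k$, which is also enforced by choosing $C \leq 1/2$.

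The final step is a Chernoff lower tail bound on $\sum_j B_j$: by Theorem~\ref{Chernoff} applied with deviation parameter $1/2$, $\prob[\sum_j B_j \leq k] \leq \prob[\sum_j B_j \leq \mu/2] \leq \exp(-\mu/8) \leq \exp(-k/4)$.  Combining with the equivalence $q_{(k)}(X) < \nu \iff \sum_j B_j < k$ yields exactly the claim.

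The main technical obstacle is really just bookkeeping: the same universal constant $C$ must simultaneously handle three constraints (the upper and lower endpoints of the range of $\nu$ in Theorem~\ref{anti-conc}, and the inequality $d \geq 2k$ needed to make $\mu \geq 2k$), and the first two of these involve $K_\beta$, which is not explicit.  The right order is therefore to pick $\beta$ first (a small absolute constant), which fixes $K_\beta$, and then choose $C$ small enough as a function of $K_\beta$ for all three constraints to hold, after which the Chernoff step is immediate and produces the clean $e^{-k/4}$ exponent.
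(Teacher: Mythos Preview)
Your proposal is correct and follows essentially the same approach as the paper: define an indicator for each column exceeding the threshold, lower-bound each indicator's probability via the anticoncentration theorem, and finish with the Chernoff lower tail. The only cosmetic difference is the choice of $\beta$: the paper takes $\beta = 1$, which makes the per-column bound exactly $\prob[E_j] \geq 2k/d$ and hence $\mu \geq 2k$ without the extra algebra, whereas your choice of a smaller $\beta$ gives a stronger per-column bound at the cost of verifying $(2k)^{(1+\beta)/2} d^{(1-\beta)/2} \geq 2k$.
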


\begin{proof}[Proof of Claim~\ref{clm:topkbias}]
For every $j \in [d]$, define $E_j$ to be the event that $$q_j = \frac{1}{n}\underset{i \in [n]}{\sum} x_{ij} > \sqrt{\frac{\ln(d/2k)}{n}}.$$

We would like to apply Theorem~\ref{anti-conc} to the random variable $\frac{1}{n}\sum_{i} x_{ij}$.  To do so, we need $$\sqrt{\ln\paren{d/2k}/n} \in \left[ \frac{K_1}{\sqrt{n}}, \frac{1}{K_{1}}\right]$$ where $K_{1}$ is the universal constant from that theorem (applied with $\beta = 1$).  These inequalities will be satisfied as long as $d \leq 2^{C n}$, and $k \leq C d$ for a suitable universal constant $C \in (0,1)$.  Applying Theorem~\ref{anti-conc} gives
$$\forall j \in [d]~~~~\p{E_j} =  \p{q_j  > \sqrt{\frac{\ln(d/2k)}{n}}} \geq \frac{2k}{d}.$$

By linearity of expectation, we have that $\E[\sum_j E_j] \geq 2k$. Since the columns of $X$ are independent, and the events $E_j$ only depend on a single column of $X$, the events $E_j$ are also independent.  Therefore, we can apply a Chernoff bound (Theorem~\ref{Chernoff}) to  $\sum_j E_j$ to get
$$\p{\sum_{j=1}^{d} E_j < k} \leq e^{-k/4}.$$

If $\sum_j E_j \geq k$, then there exist $k$ values $q_j$ that are larger than $\sqrt{\frac{\ln(d/2k)}{n}}$, so $q_{(k)}$ is also at least this value.  This completes the proof of the claim.
\end{proof}

The previous claim establishes that if we restrict $X$ to its top-$k$ columns, the resulting matrix $X_{t} \in \pmo^{n \times k}$ is a random matrix whose mean entry is significantly larger than $0$.  This claim is enough to establish that the inner product $\langle x_i, t(X) \rangle$ is large in expectation over $X$.  However, since $X_{T}$ its columns are not necessarily independent, which prevents us from applying concentration to get the high probability statement we need.  However, the columns of $X_{t}$ are independent if we condition on the value and location of the $(k+1)$-st marginal.

\begin{claim}\label{clm:independence}
	%Let $X$ be a dataset chosen at random from a product distribution over columns of $\{\pm 1\}^{n \times d}$, i.e. every column of $X$ is independent whereas its rows are not necessarily so. The columns of $X$ are independent even when conditioned on the random variables $q_{(k+1)}(X), c_{k+1}(X)$ and $\vect(X)$.
	
	%\centerline{\color{red} OR}
	
	Let $X \in \pmo^{n \times d}$ be a random matrix from a distribution with independent columns, and let $t(X)$ be its marginals.  For every $q \in [-1,1], k, j \in [d], T \in \binom{[d]}{k}$, the conditional distribution $$X \mid (q_{(k+1)} = q) \land (\pi(k+1) = j) \land (t(X) = \mathbb{I}_T)$$ also has independent columns. 
\end{claim}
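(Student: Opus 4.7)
The strategy is to rewrite the conditioning event as an intersection of events, each depending on only one column of $X$, and then to invoke the standard fact that conditioning an independent product measure on such a product-form event preserves independence of the coordinates.

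First I would note that each marginal $q_\ell(X) = \frac{1}{n}\sum_i X_{i,\ell}$ is a function of only column $\ell$, so by the assumed independence of the columns, the pairs $(X_{\cdot,1},q_1),\ldots,(X_{\cdot,d},q_d)$ are jointly independent under the unconditional law.

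Next I would unpack the event $E := \{q_{(k+1)}=q\} \cap \{\pi(k+1)=j\} \cap \{t(X)=\mathbb{I}_T\}$ column by column, using the lexicographic tie-breaking in the definition of $\pi$. Since $\pi(k+1)=j$ forces $q_j=q$, and since for any $\ell \neq j$ we have $\pi^{-1}(\ell)\leq k$ iff either $q_\ell>q$, or $q_\ell = q$ and $\ell<j$, the event $E$ should coincide with $\bigcap_{\ell=1}^d E_\ell$, where $E_j := \{q_j = q\}$; for $\ell \in T$, $E_\ell := \{q_\ell > q\} \cup (\{q_\ell = q\} \cap \{\ell < j\})$; and for $\ell \notin T \cup \{j\}$, $E_\ell := \{q_\ell < q\} \cup (\{q_\ell = q\} \cap \{\ell > j\})$. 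Each $E_\ell$ depends only on column $\ell$, because the comparisons $\ell<j$ and $\ell>j$ are deterministic statements about indices. I would check the containment $E \subseteq \bigcap_\ell E_\ell$ by reading off the column-wise implications of the three events in $E$, and the reverse containment by verifying that when every $E_\ell$ holds, the sorted order of the $q_\ell$ (with the lex-first rule) places $T$ in positions $1,\ldots,k$, $j$ in position $k+1$, and the remaining indices below; this forces the three events in $E$.

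Finally, I would conclude by the elementary fact that if $Y_1,\ldots,Y_d$ are jointly independent and each $E_\ell$ is measurable with respect to $Y_\ell$ alone, then the conditional joint law given $\bigcap_\ell E_\ell$ is still a product measure, namely the product of the laws of $Y_\ell$ restricted to $E_\ell$. Applied with $Y_\ell := X_{\cdot,\ell}$, this yields precisely the claim.

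I do not expect any serious obstacle: the content is essentially bookkeeping around the lex-first ordering. The only subtle step is to verify the equality $E = \bigcap_\ell E_\ell$ in the presence of ties, which is where the careful split between $\ell < j$ and $\ell > j$ among indices with $q_\ell = q$ is needed.
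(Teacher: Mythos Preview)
Your proposal is correct and follows essentially the same approach as the paper: both decompose the conditioning event into a product of per-column events using the lexicographic tie-breaking rule (splitting on whether $\ell<j$ or $\ell>j$ among ties at $q$), and then invoke independence of the columns. Your write-up is in fact more careful than the paper's, which states only the forward direction of the decomposition and leaves the reverse containment implicit; your plan to verify both inclusions explicitly is the right level of rigor.
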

\mitali{Do we need to write a more formal proof than this?}
\jon{No I think it's a great level of detail!  I just made some small changes to the wording.  For example I changed $i$ to $\ell$ since we generally use $i$ for rows not columns.}
\begin{proof}[Proof of Claim~\ref{clm:independence}]
Suppose we condition on the value of the $(k+1)$-st marginal, $q_{(k+1)} = q$, its location, $\pi(k+1) = j$, and the set of top-$k$ marginals $t = \mathbb{I}_{T}$.  By definition of the (exact, lexicographically first) top-$k$ vector, we have that if $\ell < j$, then $\ell \in T$ if and only if $q_{\ell} \geq q$.  Similarly, if $\ell > j$, then $\ell \in T$ if and only if $q_\ell > q$.  Since we have conditioned on a fixed tuple $(q,j,T)$, the statements $q_\ell > q$ and $q_\ell \geq q$ now depend only on the $\ell$-th column.  Thus, since the columns of $X$ are independent, they remain independent even when condition on any tuple $(q,j,T)$.  Specifically, if $\ell < j$ and $\ell \in T$, then column $\ell$ is drawn independently from the conditional distribution $((u_1,\dots,u_n) \mid \frac{1}{n} \sum_i u_i \geq q)$, where $(u_1,\dots,u_n) \in \pmo^{n}$ are chosen independently and uniformly at random.  Similarly, if $\ell > j$ and $\ell \in T$, then column $\ell$ is drawn independently from $((u_1,\dots,u_n) \mid \frac{1}{n} \sum_i u_i > q)$.
\end{proof}

Now we are ready to prove Lemma~\ref{comp_top}.
\begin{proof}[Proof of Lemma~\ref{comp_top}]
For convenience, define $\gamma = \sqrt{\frac{\ln(d/2k)}{n}}$ and $\tau_c = k\gamma - \sqrt{ 2k\ln (1/\rho)}.$  Fix any row $x_i$ of $X$.  We can write
\begin{align}
&\prob\brac{{\iprod{x_i}{t(X)} < \tau_c}} \nonumber \\
\leq{} &\prob\left[ {\iprod{x_i}{t(X)} < \tau_c} \; \big | \; q_{(k+1)} \geq \gamma \right] + \prob\brac{q_{(k+1)} < \gamma} \nonumber \\
\leq{} &\prob\left[ {\iprod{x_i}{t(X)} < \tau_c} \; \big | \; q_{(k+1)} \geq \gamma \right] + e^{-k/4}\tag{Claim~\ref{clm:topkbias}} \nonumber \\
%\leq & \sum_{\mathclap{\substack{\forall (q,c,T):\\  q \geq \gamma, \\ c \in [d], \\ T \subseteq_k [d]}}} \prob\big[ \underset{j \in T}{\textstyle \sum} x_{ij} \leq \tau \; \big | \; q_{(k+1)} = q, c_{k+1} = c, \vect(X) = \ii_T \big] \prob\brac{q_{(k+1)} = q, c_{k+1} = c, \vect(X) = \ii_T \given q_{(k+1)} \geq \gamma} + e^{-k/4}
\leq{} &\max_{q \geq \gamma, j \in [d], T \in \binom{[d]}{k}} \p{\langle x_i, t(X) \rangle < \tau_c \big | \; (q_{(k+1)} = q) \land (\pi(k+1) = j) \land (t(X) = \mathbb{I}_{T})} + e^{-k/4} \label{eq:comp_main}
\end{align}
Let $G_{q, j, T}$ be the event $(q_{(k+1)} = q) \land (\pi(k+1) = j) \land (t(X) = \mathbb{I}_{T})$.
By linearity of expectation, we can write
$$
\e{\langle x_i, t(X) \rangle < \tau_c \big | \; G_{q, j, T}} \geq kq \geq k\gamma.
$$
Using Claim~\ref{clm:independence}, we have that $\sum_{\ell: t(X)_\ell = 1} x_{i\ell}$ conditioned on $G_{\delta, j, T}$ is a sum of independent $\pmo$-valued random variables.  Thus,
\begin{align*}
\p{\langle x_i, t(X) \rangle < \tau_c \mid G_{q, j, T}} 
={} &\p{\langle x_i, t(X) \rangle < k\gamma - \sqrt{2 k \ln(1/\rho)} \mid G_{q, j, T}} \\
\leq{} &\p{\langle x_i, t(X) \rangle < kq - \sqrt{2 k \ln(1/\rho)} \mid G_{q, j, T}} \tag{$q \geq \gamma$} \\
\leq{} &\rho \tag{Hoeffding}
\end{align*}

Combining with~\eqref{eq:comp_main} completes the proof.
\begin{comment}
For some k-subset $T$ and some value of $q \geq \gamma$ and $c \in [d]$ we will now calculate, $$\prob\big[ \underset{j \in T}{\textstyle \sum} \vecx_{ij} \leq \tau \; \big | \; q_{(k+1)} = q, c_{k+1} = c, \vect(X) = \ii_T \big].$$

Since $T$ is the set of top-k columns we have that,
$$\forall j \in T, \E[x_{ij}] = \E[q_j] \geq q_{(k+1)} = q \geq \gamma.$$

Moreover conditioning on the value of $q_{(k+1)},c_{k+1}$ implies (through Lemma~\ref{independence}) that the columns of $X$ and therefore of $T$ are independent and hence the corresponding variables $x_{ij}$ are independent. We can now use Hoeffding's inequality to get,

$$\prob\big[ \underset{j \in T}{\textstyle \sum} \vecx_{ij} \leq k\gamma - \sqrt{ 2k\ln (1/\rho)} \; \big | \; q_{(k+1)} = q, c_{k+1} = c, \vect(X) = \ii_T \big] \leq \rho.$$

%$$\prob\brac{\underset{j \in T}{\sum}\vecx_{ij} < k\gamma - \sqrt{ 2k\ln (1/\rho)} \mid q_{(k+1)}} \leq \rho.$$

%Since $\iprod{\vect(\vecx)}{\vecx_i} = \underset{j \in T}{\sum}\vecx_{ij}$, we get that whenever $q_{(k+1)} \geq \gamma$,  

%$$\prob\brac{\iprod{\vect(\vecx)}{\vecx_i} < \tau \mid q_{(k+1)},q_{(k+1)} \geq \gamma} \leq \rho.$$

Plugging this term into inequality~\ref{comp_main} we get,
$$\forall i \in [n], \underset{X \sim D^n}{\prob}\brac{{\iprod{\vect(X)}{\vecx_i} < \tau}} \leq \rho + e^{-k/4}.$$\end{comment}
\end{proof}

\section{Tracing Using an Approximate Top-k Vector} \label{sec:noisytopk}
In this section we analyze the inner product attack when it is given an arbitrary approximate top-$k$ vector.

\begin{theorem} \label{thm:noisytracing}
For every $\rho > 0$, there exist  universal constants $C,C' \in (0,1)$ (depending only on $\rho$) such that if $d \in \N$ is sufficiently large and $n,d,k \in \N$ and $\alpha \in (0,1)$ satisfy 
$$
d \leq 2^{C n},~~~~
k\leq d^C,~~~~
n = C' k \ln (d/2k),~~~~
\textrm{and}~~~~\alpha \leq C \sqrt{\frac{\ln(d/2k)}{n}},
$$ 
and $\hat{t} \from \pmo^{n \times d} \to \zo^{d}$ is any randomized algorithm such that 
$$
\forall X \in \pmo^{n \times d}~~~~\p{\hat{t}(X) \textrm{ is an $\alpha$-approximate top-$k$ vector for $X$}} \geq 1-\rho,
$$
then $\mathcal{A}_{\rho, d, k}$ (Section~\ref{sec:exacttracing}) has the following properties: 
If $X \sim \pmo^{n \times d}, y \in \pmo^{d}$ are independent and uniform, then
\begin{enumerate}
\item (Soundness)~~
$\prob \brac{\mathcal{A}_{\rho,d,k}(y,t(X)) = \mathrm{IN}} \leq \rho$, and 
\item (Completeness)~~
$\prob \brac{\#\set{ i \in [n]~\left|~\mathcal{A}_{\rho, d, k}(x_i, \hat{t}(X)) = \mathrm{IN} \right.} < (1-e^2 \rho)n } < 2\rho + 2e^{-k/6}$.
\end{enumerate}
\end{theorem}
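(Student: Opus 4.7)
\emph{Soundness.} The argument follows Lemma~\ref{sound_top} essentially unchanged. Since $y$ is uniform in $\pmo^d$ and independent of $X$, it is independent of $\hat t(X)$ regardless of how $\hat t$ is implemented. Conditioning on $\hat t(X) = \mathbb{I}_S$ for each $S \in \binom{[d]}{k}$, the quantity $\sum_{j \in S} y_j$ is a sum of $k$ independent uniform $\pm 1$ variables, and Hoeffding gives $\p{\sum_{j \in S} y_j \geq \tau} \leq \rho$; averaging over the distribution of $S$ preserves the bound.

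\emph{Completeness, setup.} Write $Z_i := \langle x_i, \hat t(X)\rangle$ and $B := \{i \in [n] : Z_i < \tau\}$, so the completeness conclusion is equivalent to $\p{|B| > e^2 \rho n} < 2\rho + 2 e^{-k/6}$. I would work on the intersection of two high-probability good events: $G_1 = \{\hat t(X)\text{ is }\alpha\text{-accurate}\}$, of probability $\geq 1 - \rho$ by assumption, and $G_2 = \{q_{(k)}(X) \geq \gamma\}$ with $\gamma := \sqrt{\ln(d/2k)/n}$, of probability $\geq 1 - e^{-k/4}$ by Claim~\ref{clm:topkbias}. On $G_1 \cap G_2$, every $j \in \supp(\hat t(X))$ satisfies $q_j \geq \gamma - \alpha$, so $\sum_i Z_i = n \sum_{j \in \supp(\hat t(X))} q_j \geq nk(\gamma - \alpha)$. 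The parameter restrictions $\alpha \leq C\gamma$ and $n = C'k\ln(d/2k)$ with $C, C'$ small ensure $k(\gamma - \alpha) - \tau = \Theta(\sqrt k)$, so on average over rows the $Z_i$'s comfortably exceed $\tau$.

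\emph{From average to most.} To upgrade ``average $Z_i$ is large'' to ``most $Z_i$ are $\geq \tau$'', I plan to condition on the entire vector $q = q(X)$ of column marginals. By the reasoning behind Claim~\ref{clm:independence}, the columns of $X$ are independent given $q$, each column $j$ uniform on $\{v \in \pmo^n : \sum_i v_i = nq_j\}$; the eligible set $E(X) = \{j : q_j \geq q_{(k)} - \alpha\}$ is determined by $q$, and $\hat t(X) \subseteq E(X)$ always. By symmetry of the IID row distribution, $\p{i \in B}$ is the same for every $i$, so $\e{|B| \mid G_1 \cap G_2} = n \cdot \p{1 \in B \mid G_1 \cap G_2}$. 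Granting a per-row bound $\p{1 \in B \mid G_1 \cap G_2} \leq \rho$, I would then apply a Chernoff-type concentration bound to $|B|$ (exploiting that the $Z_i$'s inherit negative association from the within-column sum constraint) to obtain $\p{|B| > e^2 \rho n \mid G_1 \cap G_2} \leq e^{-k/6}$, and combining with $\p{G_1^c} + \p{G_2^c}$ yields the final $2\rho + 2 e^{-k/6}$ bound.

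\emph{Main obstacle.} The crux is the per-row bound $\p{Z_1 < \tau \mid G_1 \cap G_2} \leq \rho$. For any $k$-subset $S$ \emph{fixed independently of} $x_1$, eligibility forces $q_j \geq \gamma - \alpha$ for $j \in S$, and Hoeffding applied within each conditionally-independent column yields $\p{Z_1(S) < \tau \mid q} \leq \rho$; but the adversary's $S = \hat t(X)$ may depend on $x_1$ itself, so this argument does not apply directly. A naive deterministic bound using $\sum_i Z_i \geq nk(\gamma - \alpha)$ together with $Z_i \leq k$ only gives $|B| \leq n$ and is useless, so the proof must genuinely use probabilistic structure. The resolution I envision leverages (i) the $q$-measurability of $E(X)$, so that the adversary's choice lives within a set that is symmetric in rows under the conditional distribution of $X$ given $q$, and (ii) the exchangeability of rows, which implies that any strategy that depresses $Z_1$ with large probability must, by symmetry, depress every $Z_i$ with the same probability, and combined with the global lower bound $\sum_i Z_i \geq nk(\gamma - \alpha)$ caps how large this common probability can be. Quantifying this tradeoff sharply enough to obtain a per-row bound of $\rho$ (rather than some weaker function of $\rho$) is the main technical work, and it is precisely what replaces the clean single-row Hoeffding argument of Lemma~\ref{comp_top} for the exact top-$k$ case.
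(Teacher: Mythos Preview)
Your soundness argument and your identification of the obstacle are both correct, but the resolution you sketch does not work, and the paper's actual argument is structurally different from what you propose.

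\textbf{The per-row bound is not the right target.} You aim to prove $\p{Z_1 < \tau \mid G_1 \cap G_2} \leq \rho$ and then concentrate $|B|$. But a per-row bound of this strength is false in general: as the paper notes in its overview, when there are (say) $3k$ eligible columns the adversary can, for any single fixed row $i$, choose a $k$-subset on which $x_i$ is nearly balanced, driving $Z_i$ down. Exchangeability tells you all rows have the same marginal probability of landing in $B$, but it does not by itself cap that common probability; the global sum bound $\sum_i Z_i \geq nk(\gamma-\alpha)$ combined with $Z_i \leq k$ gives only $\p{Z_1<\tau}\leq \frac{1-(\gamma-\alpha)}{1-\tau/k}$, which is essentially $1$ since $\gamma-\alpha=o(1)$. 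You acknowledge the naive bound is useless, but your proposed refinement via ``symmetry plus global constraint'' does not escape this arithmetic. Moreover, even granting a per-row bound, the indicators $\mathbb{I}\{i\in B\}$ are functions of the adversarial $\hat t(X)$, so there is no reason they inherit negative association; you would get Markov (hence $\p{|B|>e^2\rho n}\leq e^{-2}$), not the exponential tail $e^{-k/6}$.

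\textbf{The missing idea.} The paper does not attempt a per-row bound. Instead it introduces a \emph{third} good event: with $\lambda := \gamma - \alpha$ and $S_\lambda := \{j : q_j > \lambda\}$, Hoeffding plus Chernoff show $\p{|S_\lambda| > d_\lambda} \leq e^{-k/6}$ where $d_\lambda = 2d\,e^{-\lambda^2 n/2}$. On $G_\alpha\cap G_\gamma$ the adversary's $\hat t$ must lie inside $\binom{S_\lambda}{k}$, and one now takes a brute-force union bound over all $\binom{d_\lambda}{k}$ such vectors. For each \emph{fixed} $v$, the events $E_{i,v}=\{\langle x_i,v\rangle<\tau_c\}$ satisfy $\p{\bigwedge_{i\in R}E_{i,v}}\leq\rho^{|R|}$ by a positive-dependence argument on the column-conditioned rows (this is the paper's Claim~\ref{corr_rows}); a second union bound over $R\in\binom{[n]}{cn}$ then gives $\p{\exists v:\mathcal{A}\text{ fails on }v}\leq\binom{d_\lambda}{k}\binom{n}{cn}\rho^{cn}\leq d_\lambda^{k}e^{-cn}$. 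The specific constants $C,C'$ and the relation $n=C'k\ln(d/2k)$ are engineered precisely so that $d_\lambda^{k}e^{-cn}\leq\rho$. This union-bound-over-candidate-vectors step, together with the upper bound on $|S_\lambda|$, is the heart of the argument and is absent from your plan.
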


The proof of the soundness property is nearly identical to the case of exact statistics so we will focus only on proving the completeness property.   

Intuitively, the proof of completeness takes the same general form as it did for the case of exact top-$k$ statistics.  First, for some parameter $\gamma > 0$, with high probability $X$ has at least $k$ marginals that are at least $\gamma$.  Therefore, any marginal $j$ contained in \emph{any} $\alpha$-accurate top-$k$ vector has value $q_j \geq \lambda := \gamma - \alpha$.  From this, we can conclude that the expectation of $\langle x_i, \hat{t}(X) \rangle \geq k(\gamma- \alpha)$ where the expectation is taken over the choices of $X, \hat{t}(X),$ and $i$.  However, unlike the case of the exact marginals, the $k$ columns selected by $\hat{t}$ may be significantly correlated so that for some choices of $i$, $\langle x_i, \hat{t}(X) \rangle$ is small with high probability.  At a high level we solve this problem as follows: first, we restrict to the set of $d_{\lambda}$ columns $j$ such that $q_j \geq \gamma - \alpha$, which remain mutually independent.  Then we argue that for every fixed $\alpha$-accurate top-$k$ vector specifying a subset of $k$ of these columns, with overwhelming probability the inner product is large for most choices of $i \in [n]$.  Finally, we take a union bound over all $\binom{d_{\lambda}}{k}$ possible choices of $\alpha$-accurate top-$k$ vector.  To make the union bound tolerable, we need that with high probability $d_{\lambda}$ is not too big.  Our choice of $\gamma$ was such that only about $k$ columns are above $\gamma$, therefore if we take $\lambda$ very close to $\gamma$, we will also be able to say that $d_{\lambda}$ is not too much bigger than $k$.  By assuming that the top-$k$ vector is $\alpha$-accurate for $\alpha \ll \gamma$, we get that $\lambda = \gamma - \alpha$ is very close to $\gamma$.

Before stating the exact parameters and conditions in Lemma~\ref{comp_noisy}, we will need to state and prove a few claims about random matrices.

\begin{claim} \label{lem:noisytopkbias}
For every $\beta > 0$, there is a universal constant $C \in (0,1)$ (depending only on $\beta$), such that for every $n \in \N$, $d \leq 2^{C n}$ and $k \leq C d$, if $X \in \pmo^{n \times d}$ is drawn uniformly at random, then for $\gamma := \sqrt{\frac{2}{1+\beta} \cdot \frac{\ln(d/2k)}{n}}$, we have
$$
\p{q_{(k)}(X) < \gamma }  \leq e^{-k/4}.
$$
\end{claim}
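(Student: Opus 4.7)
The plan is to follow essentially the same three-step template used in the proof of Claim~\ref{clm:topkbias}, but invoking Theorem~\ref{anti-conc} with the prescribed parameter $\beta$ rather than with $\beta = 1$. The only thing that changes quantitatively is the anticoncentration constant, which forces a $\beta$-dependent choice of the universal constant $C$.

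First, for each column $j \in [d]$, let $E_j$ be the event that $q_j(X) > \gamma$, where $\gamma = \sqrt{\frac{2}{1+\beta} \cdot \frac{\ln(d/2k)}{n}}$. With the constant $K_\beta > 1$ supplied by Theorem~\ref{anti-conc} for this $\beta$, I would choose $C = C(\beta) \in (0,1)$ small enough that the hypotheses $d \leq 2^{Cn}$ and $k \leq Cd$ guarantee $\gamma \in [K_\beta/\sqrt{n},\, 1/K_\beta]$. (Concretely, $d \leq 2^{Cn}$ keeps $\gamma$ below $1/K_\beta$, and $k \leq Cd$ keeps $\gamma$ above $K_\beta/\sqrt n$; picking $C$ sufficiently small makes both hold simultaneously.) Then Theorem~\ref{anti-conc} applied to the column sum $n\cdot q_j(X)$ yields
\begin{equation*}
\p{E_j} \;\geq\; \exp\!\left(-\tfrac{1+\beta}{2} \gamma^2 n\right) \;=\; \exp(-\ln(d/2k)) \;=\; \frac{2k}{d}.
\end{equation*}

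Second, because the columns of $X$ are mutually independent and each $E_j$ depends only on column $j$, the indicators $\mathbb{1}[E_j]$ are independent $\zo$-valued random variables with $\E[\sum_{j} \mathbb{1}[E_j]] \geq 2k$. Applying the lower-tail Chernoff bound (Theorem~\ref{Chernoff}) with deviation $\nu = 1/2$ gives
\begin{equation*}
\p{\sum_{j=1}^d \mathbb{1}[E_j] < k} \;\leq\; \p{\sum_{j=1}^d \mathbb{1}[E_j] < \tfrac{1}{2} \cdot 2k} \;\leq\; e^{-k/4}.
\end{equation*}

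Finally, on the complementary (high-probability) event $\sum_j \mathbb{1}[E_j] \geq k$, at least $k$ of the column marginals exceed $\gamma$, which immediately forces $q_{(k)}(X) \geq \gamma$. Combining gives $\p{q_{(k)}(X) < \gamma} \leq e^{-k/4}$, as claimed. There is no real obstacle here: the proof is a direct parameter-tracking generalization of Claim~\ref{clm:topkbias}, and the only subtlety is bookkeeping the constant $C = C(\beta)$ so that the anticoncentration range condition on $\gamma$ remains satisfied for all admissible $(n,d,k)$.
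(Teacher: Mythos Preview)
Your proposal is correct and is exactly the approach the paper intends: the paper omits the proof and states that it is ``just a slightly more general version of Claim~\ref{clm:topkbias} (in which we have fixed $\beta = 1$),'' and you have carried out precisely that generalization, replacing $K_1$ by $K_\beta$ and tracking the $\beta$-dependent constant $C$.
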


The above claim is just a slightly more general version of Claim~\ref{clm:topkbias} (in which we have fixed $\beta = 1$), so we omit its proof.

\begin{claim}\label{clm:1}
	For every $n,d \in \N$ and every $\lambda \in (0,1)$, if $X \in \ds$ is drawn uniformly at random, then for $d_{\lambda} := 2d \exp(-\frac{1 }{ 2} \lambda^2 n)$
	\begin{align*}
		\p{\# \set{ j~\left|~q_j > \lambda \right.} > d_{\lambda}} \leq e^{-d_{\lambda}/6}.
	\end{align*} 
\end{claim}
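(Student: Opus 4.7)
My plan is to view the count $\#\set{j : q_j > \lambda}$ as a sum of $d$ independent Bernoulli indicators, use Hoeffding on each individual marginal to bound the mean of this sum by $d_\lambda/2$, and then apply the multiplicative Chernoff upper tail (Theorem~\ref{Chernoff}) to control the probability that the sum exceeds twice its mean.

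In more detail, first I would define $Z_j = \mathbb{I}\{q_j(X) > \lambda\}$ for each $j \in [d]$. Since the columns of $X$ are independent and $Z_j$ depends only on the $j$-th column, the variables $Z_1, \ldots, Z_d$ are mutually independent $\zo$-valued random variables. Moreover, $n q_j$ is a sum of $n$ independent uniform $\pmo$ variables, so Hoeffding's inequality (Lemma~\ref{Hoeffding}) gives $\E[Z_j] = \p{q_j > \lambda} \leq e^{-\lambda^2 n/2}$. Setting $Z = \sum_j Z_j$ and $\mu = \E[Z]$, linearity yields $\mu \leq d \cdot e^{-\lambda^2 n/2} = d_\lambda/2$.

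Next I would apply the Chernoff upper tail to the event $\{Z > d_\lambda\}$. If $d_\lambda \geq d$ the claim is vacuous since $Z \leq d$ always, so I may assume $\mu \leq d_\lambda/2 \leq d/2$. Choosing $\nu$ so that $(1+\nu)\mu = d_\lambda$ (so $\nu \geq 1$), the Chernoff bound gives
\[
\p{Z > d_\lambda} \;\leq\; \exp\!\left(-\tfrac{\nu^2}{2+\nu}\,\mu\right) \;=\; \exp\!\left(-\tfrac{(d_\lambda - \mu)^2}{d_\lambda + \mu}\right).
\]
A short monotonicity calculation (the function $\mu \mapsto (d_\lambda-\mu)^2/(d_\lambda+\mu)$ is decreasing on $[0,d_\lambda]$) shows that this exponent is at least $d_\lambda/6$, attained in the worst case $\mu = d_\lambda/2$. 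This yields the desired bound $e^{-d_\lambda/6}$.

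I don't expect any real obstacle; the main thing to be careful about is the (trivial) boundary case where $d_\lambda$ exceeds $d$, and the bookkeeping that turns the multiplicative Chernoff exponent into the clean constant $1/6$. Both are routine.
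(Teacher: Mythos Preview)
Your proposal is correct and follows essentially the same approach as the paper: define indicator variables for the events $\{q_j > \lambda\}$, bound each by Hoeffding to get $\mu \le d_\lambda/2$, then apply the multiplicative Chernoff upper tail to $\sum_j Z_j$. The paper's proof is slightly terser---it simply invokes Chernoff without spelling out the monotonicity step or the $d_\lambda \ge d$ boundary case---so your extra bookkeeping is harmless and arguably cleaner.
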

%\jon{The RHS of Lemma 4.4 was a bit hard to parse, so I changed it to be in terms of the quantity $d_{\lambda}$.  Also, since $d_{\lambda} > k$, we can make the probability of failure $2e^{-k/4}$ instead of $e^{-k/4} + e^{???}$.}
%\mitali{Here we have no condition on $\lambda$ so I didn't do anything about it. I've simplified it to $e^{-k/6}$ later.}

\begin{proof}[Proof of Claim~\ref{clm:1}]
	For every $j \in [d]$, define $E_j$ to be the event that $q_j = \frac{1}{n}\sum_{i} x_{ij} > \lambda.$	
 	Since the $x_{ij}$'s are independent, applying Hoeffding's bound to $\sum_i x_{ij}$ gives,
	$$\forall j \in [d]~~~~\p{E_j} =  \p{q_j  > \lambda} \leq e^{-\lambda^2 n /2}.$$
	
	By linearity of expectation, we have that $\E[\sum_j E_j] \leq de^{-\lambda^2n / 2} = \frac{1}{2}d_{\lambda}$. Since the columns of $X$ are independent, we can apply a Chernoff bound (Theorem~\ref{Chernoff}) to $\sum_j E_j$, which gives
	$$\p{\textstyle\sum_{j=1}^{d} E_j > d_{\lambda}} \leq e^{-d_{\lambda}/6}.$$
	
	This completes the proof of the claim.
\end{proof}
 
%\jon{This might be my fault for moving text around, but what is $E_{i,v}$?  The proof doesn't actually seem to refer to $E_{i,v}$ anywhere, only $E_{i}$.  Is $v$ a subset of the columns?} 
%\mitali{Sorry, I meant to elaborate in the lemma statement.} 

\jon{It's ever-so-slightly weird to have constants $C_2,\dots,C_5$ instead of $C_1,\dots,C_4$ but it doesn't seem worth the trouble to fix now.}
\mitali{It was a Constant dilemna I had too! But $C_1$ has to be the first constant we define in the proof whereas it doesn't appear in the lemma at all.}
\jon{Pun intended?}

\begin{comment}
\mitali{I've put the constants here for our reference.}
\begin{itemize}
	\item
	For $\gamma = C_1\sqrt{\frac{\ln(d/2k)}{n}}$, $C_1 = \sqrt{\frac{32\ln(e^2/c)}{16\ln(e^2/c) + c}}$
	\item
	$\lambda = C_3\sqrt{\frac{\ln(2d)}{n}}$, $C_3 = \sqrt{\frac{8\ln(e^2/c)}{4\ln(e^2/c) + c}}$
	\item
	$\alpha < C_5\sqrt{\frac{\ln(2d)}{n}}$, $C_5 = C_1 - C_3$
	\item 
	$k < d^C_2$, $C_2 = \frac{c}{2c+16\ln(e^2/c)}$	
\end{itemize}
\end{comment}

Now we are ready to state our exact claim about the completeness of the attack when given an $\alpha$-accurate top-$k$ vector.

%\jon{I put this back at the end because I realized that the claim was sufficiently complex that the reader would probably want to go back and examine it.}

%\jon{I am a bit confused by the parameters.  Shouldn't $\tau_c = k\lambda - \sqrt{2k \ln(1/\rho)}$?  The way you defined the constants in the proof, $C_1 \sqrt{\frac{\ln(2d)}{n}} = \gamma$.  I feel like there is very little relationship between the constants you use in the proof and the constants as defined in the lemma.}

\begin{lemma}[Completeness]\label{comp_noisy} 
	For every $\rho > 0$, there exist universal constants $C_2,C_3,C_4,C_5 \in (0,1)$ (depending only on $\rho$) such that if $n,d,k \in \N$ and $\alpha \in (0,1)$ satisfy, 
	$${4k \leq \min\{(2d)^{C_2}, 4C_4 d\}}, ~~~ 
	{8n \ln(1/\rho) = C_3^2 k \ln(2d)}, ~~~
	d \leq 2^{C_4 n} ~~~
	\alpha \leq C_5\sqrt{\frac{\ln(2d)}{n}},~~~
	$$
	 and $\hat{t}$ is an algorithm that, for every $X \in \pmo^{n \times d}$, outputs an $\alpha$-accurate top-$k$ vector with probability at least $1-\rho$, then for a uniformly random $X \in \pmo^{n \times d}$, we have
	$$\prob\brac{\# \left\{i \in [n] \mid \iprod{x_i}{\hat{t}(X)} \geq \tau_c \right\} < (1-e^2 \rho)n }  < 2\rho + e^{-k/4} + e^{-k/6},$$
	where $\tau_c := C_3k\sqrt{\frac{\ln(2d)}{n}} - \sqrt{2k\ln(1/\rho)}$. 
\end{lemma}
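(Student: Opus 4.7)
The plan is to combine three high-probability structural facts about $X$ with a union bound over the possible outputs of $\hat{t}$. Set $\gamma := \sqrt{\tfrac{2}{1+\beta}\cdot\tfrac{\ln(d/2k)}{n}}$ from Claim~\ref{lem:noisytopkbias} (where $\beta > 0$ is a small constant to be fixed in terms of $\rho$) and $\lambda := \gamma - \alpha$; let $S_\lambda := \{j : q_j(X) > \lambda\}$ and $d_\lambda := 2d\exp(-\lambda^2 n/2)$. Consider the three good events $G_1 := \{q_{(k)}(X) \geq \gamma\}$, $G_2 := \{|S_\lambda| \leq d_\lambda\}$, and $G_3 := \{\hat{t}(X) \text{ is } \alpha\text{-accurate}\}$, which by Claim~\ref{lem:noisytopkbias}, Claim~\ref{clm:1}, and the hypothesis on $\hat{t}$ fail with probability at most $e^{-k/4}$, $e^{-d_\lambda/6}$, and $\rho$, respectively. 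Under $G_1 \cap G_3$ every column $j$ in the support of $\hat{t}(X)$ satisfies $q_j \geq q_{(k)}(X) - \alpha \geq \lambda$, so $\hat{t}(X) = \mathbb{I}_T$ for some $T \subseteq S_\lambda$; together with $G_2$, there are at most $\binom{d_\lambda}{k}$ candidate values for $T$.

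The heart of the argument is to show, for every fixed $T \in \binom{[d]}{k}$,
$$
\p{T \subseteq S_\lambda \;\wedge\; \#\{i \in [n] : \iprod{x_i}{\mathbb{I}_T} < \tau_c\} > e^2 \rho n} \leq \frac{\rho}{\binom{d_\lambda}{k}},
$$
after which a union bound over $T$ finishes the job. I would condition on the full vector of column sums $(q_1,\ldots,q_d)$, under which $S_\lambda$ is determined and the columns of $X$ are mutually independent (each column $j$ is uniform on $\pmo^n$ conditioned on having sum $nq_j$). Fix a row $i$ and a $T \subseteq S_\lambda$; then $R_i := \iprod{x_i}{\mathbb{I}_T} = \sum_{j \in T} x_{ij}$ is a sum of $k$ independent $\pmo$-valued variables (across $j$) with $\E[R_i] = \sum_{j \in T} q_j \geq k\lambda$, so Hoeffding yields $\p{R_i < k\lambda - \sqrt{2k\ln(e^2/\rho)}} \leq \rho/e^2$. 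The parameter inequalities in the lemma hypothesis can be used to choose $\beta, C_3, C_5$ so that $k\lambda - \sqrt{2k\ln(e^2/\rho)} \geq \tau_c$, and hence $\p{R_i < \tau_c \given q_1,\ldots,q_d} \leq \rho/e^2$.

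To pass from this per-row bound to the count of bad rows, the subtlety is that $R_1,\ldots,R_n$ share columns and are not independent even after conditioning on the column sums. I would invoke negative association: each column conditional on its sum is a uniformly random arrangement of a fixed multiset of $\pm 1$s and is thus negatively associated, and the columns are conditionally independent; since NA is preserved under sums of independent vectors and under monotone functions of disjoint coordinates, $(\mathbb{I}[R_i < \tau_c])_{i \in [n]}$ is NA. Chernoff's upper tail therefore applies, giving $\p{\#\{i : R_i < \tau_c\} > e^2\rho n \given q_1,\ldots,q_d} \leq \exp(-\Omega(\rho n))$ uniformly on $\{T \subseteq S_\lambda\}$. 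This is at most $\rho/\binom{d_\lambda}{k}$ whenever $k\ln(d_\lambda/k) \lesssim \rho n$, which follows from $\log d_\lambda \approx (1 - 1/(1+\beta))\log(2d/k)$ (for $\alpha \ll \gamma$) together with the hypothesis $n \asymp k\ln(2d)/\ln(1/\rho)$, once $\beta$ is small enough. Adding the failure probabilities of $G_1,G_2,G_3$ and the $\rho$ from the union bound, and using that $d_\lambda \geq k$ forces $e^{-d_\lambda/6} \leq e^{-k/6}$, yields the claimed overall bound of $2\rho + e^{-k/4} + e^{-k/6}$.

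The main obstacle is precisely the concentration step above: a naive bounded-differences/Azuma argument exposing one column at a time has per-column Lipschitz constant $n$ and only yields an exponent of order $\rho^2 n/k$, too weak to absorb the union bound over $\binom{d_\lambda}{k}$ candidate sets. Negative association is the clean way to obtain the correct $\exp(-\Omega(\rho n))$ decay. A secondary headache is verifying that $\beta, C_2, C_3, C_4, C_5$ can be chosen jointly so that both $k\lambda \geq \tau_c + \sqrt{2k\ln(e^2/\rho)}$ and $\log\binom{d_\lambda}{k} \lesssim \rho n$ hold simultaneously — routine but delicate parameter juggling.
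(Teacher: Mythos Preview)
Your proposal is correct and follows essentially the same architecture as the paper's proof: the same three good events, the same reduction to $\hat t$ being supported on $S_\lambda$, and the same union bound over $k$-subsets of $S_\lambda$, with the row-dependence handled via the negative/positive association of the column entries after conditioning on column sums. The only cosmetic difference is in the concentration step for the number of ``bad'' rows: the paper bounds $\p{\bigwedge_{i\in R} E_{i,v}}\le \rho^{|R|}$ by relaxing $\bigwedge_{i\in R} E_{i,v}$ to the single event $\sum_{i\in R,\,j\in v} x_{ij}<|R|\tau_c$ and applying Hoeffding for positively-correlated variables (then union-bounding over $R$), whereas you invoke NA closure to apply Chernoff directly to $\sum_i \mathbb{I}[E_{i,v}]$---both routes exploit exactly the same dependence structure and yield the same $e^{-\Omega(\rho n)}$ bound.
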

To see how the completeness property of Theorem~\ref{thm:noisytracing} follows from the lemma, observe if ${8n \ln(1/\rho) = C_3^2 k \ln(2d)}$, then $$\tau_c = C_3k\sqrt{\frac{\ln(2d)}{n}} - \sqrt{2k\ln(1/\rho)} = \sqrt{2k\ln(1/\rho)} = \tau$$ where $\tau$ is the threshold in $\mathcal{A}_{\rho, d,k}$.  Therefore Lemma~\ref{comp_noisy} implies that $$\prob \brac{\#\set{ i \in [n]~\left|~\mathcal{A}_{\rho, d, k}(x_i, \hat{t}(X)) = \mathrm{IN} \right.} < (1-e^2 \rho)n } < \rho + e^{-k/4} + e^{-k/6}.$$ 

The universal constants $C,C'$ will be $C = \min\{ C_2,C_4,C_5 \} - \delta$ for an arbitrarily small $\delta > 0$, and $C' = C_3^2$.  As long as $d$ is sufficiently large the conditions $k \leq d^{C}$ in Theorem~\ref{thm:noisytracing} will imply the corresponding condition in the above lemma.

\begin{proof}[Proof of Lemma~\ref{comp_noisy}]

First, we will condition everything on the event $$G_{\alpha} := \{\textrm{$\hat{t} = \hat{t}(X)$ is an $\alpha$-accurate top-$k$ vector of $X$}\}.$$  By assumption, for every $X \in \pmo^{n \times d}$, $\p{G_\alpha} \geq 1-\rho$.

\jon{This is low-priority, but it might make sense to get rid of $c$ completely and just write $e^2 \rho$.  Since the constants are messy and hard-to-understand anyway, I'm not sure it's worth it to make the reader translate between $c$ and $\rho$.}

\medskip

For convenience define the constant $c := e^2\rho$, so that the lemma asserts that, with high probability, $\mathcal{A}(x_i, \hat{t}(X)) = \IN$ for at least $(1-c)n$ rows $x_i$.  
As in the proof of completeness for the case of exact top-$k$, we will first condition on the event that at least $k$ marginals are above the threshold $\gamma$.  Now, by Claim~\ref{lem:noisytopkbias}, with an appropriate choice of 
$$
\beta := \frac{c}{16\ln(1/\rho)}~~~~~
\gamma :=  \sqrt{\frac{2}{1+ \frac{c}{16\ln(1/\rho)}}} \cdot \sqrt{\frac{\ln(d/2k)}{n}},
$$ 
and the assumptions that $k \leq C_4 d$ and $d \leq 2^{C_4 n}$ for some universal constant $C_4$ depending only on $\beta$, the event
$$
G_{\gamma} := \left\{ q_{(k)}(X) \geq \gamma = C_1 \sqrt{\frac{\ln(2d)}{n}} \right\},
$$
will hold with probability $1-e^{-k/4}$.  Here we define the universal constants
$$
C_1 := \sqrt{\frac{2}{1 + \frac{c}{8\ln(1/\rho)}}}~~~~C_2 := \frac{c}{2c + 4 \ln(1/\rho)}.
$$
depending only on $\rho$.  These constants were chosen so that provided $4k \leq (2d)^{C_2}$, the inequality in the definition of $G_{\gamma}$ will be satisfied.

In light of the above analysis, we condition the rest of the analysis on the event $G_{\gamma}$, which is satisfies $\p{G_{\gamma}}\geq 1 - e^{-k/4}$.

\medskip

\begin{comment}
 which occurs with probability at least $1-e^{-k/4}$ (Claim~\ref{lem:noisytopkbias}).  That is, we condition on
\begin{equation} \label{event1}
q_{(k)} \geq \gamma = \sqrt{\frac{2}{1+ \frac{c}{16\ln(1/\rho)}}} \cdot \sqrt{\frac{\ln(d/2k)}{n}}
\end{equation}
When $k$ is small enough ($4k < (2d)^{\frac{c}{2c+4\ln(1/\rho)}}$ suffices), we get that, $\gamma \geq \paren{\sqrt{\frac{2}{1 + \frac{c}{8\ln(1/\rho)}}}}\sqrt{\frac{\ln(2d)}{n}} := C_1\sqrt{\frac{\ln(2d)}{n}}$.
\end{comment}

If we condition on $G_{\alpha}$ and $G_{\gamma}$, then for any marginal $j$ chosen by $\hat{t}$ (i.e.~$\hat{t}_j = 1$), then we can say that $q_{j} \geq \lambda$ for any $\lambda \leq \gamma - \alpha$.  Now, we define the constants 
$$
C_3 := \sqrt{\frac{2}{1 + \frac{c}{4\ln(1/\rho)}}}~~~~C_5 := C_1 - C_3 > 0,
$$
where one can verify that the inequality $C_1 - C_3 > 0$ holds for all choices of $c$.  Now by our assumption that $\alpha < C_5 \sqrt{\frac{\ln(2d)}{n}}$, we can define
$
\lambda := C_3 \sqrt{\frac{\ln(2d)}{n}}.
$

For any matrix $X \in \pmo^{n \times d}$, we can define $S_\lambda = S_{\lambda}(X) \subseteq \set{1,\dots,d}$ to be the set of columns of $X$ whose marginals are greater than $\lambda$. The analysis above says that, conditioned on$G_{\gamma}$ and $G_{\alpha}$, if $\hat{t}_j = 1$, then $j \in S_{\lambda}$.  Note that, if $X$ is chosen uniformly at random, and we define $X_{\geq \lambda} \in \pmo^{n \times |S_{\lambda}|}$ to be the restriction of $X$ to the columns contained in $S_{\lambda}$, then the columns of $X_{\geq \lambda}$ remain independent.

The size of $S_{\lambda}$ is a random variable supported on $\{0,1,\dots,d\}$.  In our analysis we will need to condition on the even that $|S_{\lambda}| \ll d$.  Using Claim~\ref{clm:1} we have that if
$$
d_{\lambda} := 2d e^{-\frac{\lambda^2 n}{2}}
$$
then the event
$$
G_{S} := \left\{ |S_{\lambda}(X)| \leq d_{\lambda} \right\}
$$
satisfies $\p{G_{S}} \geq 1 - e^{-d_{\lambda}/6} \geq 1 - e^{-k/6}$ where we have used the fact that $d_{\lambda} \geq k$.  This fact is not difficult to verify form our choice of parameters.  Intuitively, since $\lambda \leq \gamma$, and there are at least $k$ marginals larger than $\gamma$, there must also typically be at least $k$ marginals larger than $\lambda.$  We condition the remainder of the analysis on the event $G_{S}$.

Later in the proof we require that the size of $S_\lambda$ is small with high probability. Using Claim~\ref{clm:1} we can say that the size of $S_\lambda$ is at most $d_\lambda = 2de^{-\frac{\lambda^2n}{2}}$ with probability at least, $1 - e^{-d_\lambda/6}$. When $\vecq_{(k)} \geq \gamma$,  the number of marginals greater than $\lambda$ would be at least $k$. So $d_\lambda > k$ and the error probability $e^{-d_\lambda/6}$ is at most $e^{-k/6}$. We will henceforth condition on the event that $|S_\lambda(X)| \leq d_\lambda$.

We will say that the attack $\attack$ \emph{fails on $\hat{t}$} when we fail to trace more than $cn$ rows, i.e. $\attack$ fails when ${\abs{\{i : \iprod{x_i}{\hat{t}} < k\lambda - \sqrt{2k\ln(1/\rho)} \} } > cn = e^2 \rho n}$. Formally we have that,

%\mitali{Do we need to explain here that how the term $e^{-\frac{d \exp(-\frac{\lambda^2n}{2})}{3}} \leq e^{-k/6}$?}

\begin{align}\label{eqn1}
	\p{\attack \textrm{ fails on $\hat{t}$}} 
	\leq{} &\p{\attack \textrm{ fails on $\hat{t}$} \land G_{\alpha} \land G_{\gamma} \land G_{S}} + \p{\neg G_\alpha \lor \neg G_{\gamma} \lor \neg G_{S}}  \nonumber\\
	\leq{} & \p{\attack \textrm{ fails on $\hat{t}$} \land G_{\alpha} \land G_{\gamma} \land G_{S}} + \rho + e^{-k/4} + e^{-k/6}
	%&\leq \max_{\substack{S \subseteq [d]:\\ k \leq |S| \leq d_\lambda}}\prob\big[ \attack \textrm{ fails} \given  S_{\lambda} = S, \vecq_{(k)} \geq \gamma\big ] +  e^{-k/6} + e^{-k/4}.
\end{align}

Thus, to complete the proof, it suffices to show that
\begin{align*} 
\p{\attack \textrm{ fails on $\hat{t}$} \land G_{\alpha} \land G_{\gamma} \land G_{S}}
 ={} &\p{(\attack \textrm{ fails on $\hat{t}$}) \land (\textrm{$\hat{t}$ is $\alpha$-accurate}) \land (q_{(k)} \geq \gamma) \land (|S_{\lambda}| \leq d_{\lambda})} \notag \\
\leq{} &\p{(\attack \textrm{ fails on $\hat{t}$}) \land (\hat{t} \subseteq S_{\lambda}) \land (
|S_{\lambda}| \leq d_{\lambda})} \notag \\
\leq{} &\p{\left(\exists v \in \binom{S_{\lambda}}{k}~~\attack \textrm{ fails on $v$}\right) \land (|S_{\lambda}| \leq d_{\lambda})} 
\end{align*}
where we have abused notation and written $\hat{t} \subseteq S_{\lambda}$ to mean that $\hat{t}_j = 1 \Longrightarrow j \in S_{\lambda},$ and used $v \in \binom{S_{\lambda}}{k}$ to mean that $v$ is a subset of $S_{\lambda}$ of size exactly $k$.

We will now upper bound
$
\p{(\exists v \in \binom{S_{\lambda}}{k}~~\attack \textrm{ fails on $v$}) \land (|S_{\lambda}|\leq d_{\lambda})}.
$
Observe that, since the columns of $X$ are identically distributed, this probability is independent of the specific choice of $S_{\lambda}$ and depends only on $|S_{\lambda}|$.  Further, decreasing the size of $S_{\lambda}$ only decreases the probability.  Thus, we will fix a set $S$ of size exactly $d_{\lambda}$ and assume $S_{\lambda} = S$.  Thus, for our canonical choice of set $S = \set{1,\dots,d_{\lambda}}$, we need to bound
$
\p{\exists v \in \binom{S}{k}~~\attack \textrm{ fails on $v$}}.
$

\medskip

Consider a fixed vector $v \subseteq S$.  That is, a vector $v \in \zo^{d}$ such that $v_j = 1 \Longrightarrow j \in S$.  Define the event $E_{i,v}$ to be the event that $\langle x_i, v \rangle$ is too small for some specific row $i$ and some specific vector $v \subseteq S$.  That is,
$$
E_{i,v}:= \left\{ \iprod{x_i}{v} < \tau_c := k\lambda - \sqrt{2k\ln(1/\rho)}\right\}.
$$

\jon{The event should be defined in terms of $\tau_c$, right?  The conclusion of the lemma is stated in terms of $\tau_c$}
\mitali{Yeah, that was a typo.}

Since the columns of $X_{S}$ are independent, for a fixed $i$ and $v$, by Hoeffding's inequality gives
$$
\p{E_{i,v}} = \p{\textstyle\sum_{j : v_j = 1} x_{i,j} < \tau_c} \leq \rho.
$$

We have proved that the probability that $\langle x_i, v \rangle$ is small, is small for a given row.  We want to bound the probability that $\langle x_i, v \rangle$ is small for an entire set of rows $R \subseteq [n]$.  Unfortunately, since we require that $q_j \geq \lambda$ for every column $j \in S$, the rows $x_i$ are no longer independent.  However, the rows satisfy a negative-dependence condition, captured in the following claim.

%\mitali{Can we put the proof of this claim after the main lemma?}
%\jon{Yes, I like that approach.  I made it a subsection though, and added a forward pointer, so the flow is clear.}
%\jon{I changed from $T$ to $R$.  $T$ is already used several times in the paper.}

\begin{claim}~\label{corr_rows}
For every $R \subseteq [n]$,
	$$\pr{}{ \bigwedge_{i \in R} E_{i,v}} \leq  \rho^{|R|}.$$
\end{claim}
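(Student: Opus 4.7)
The plan is to reduce the bound on $\p{\bigwedge_{i \in R} E_{i,v}}$ to a single lower-tail Hoeffding estimate for the global sum $W := \sum_{i \in R} \iprod{x_i}{v} = \sum_{i \in R} \sum_{j : v_j = 1} x_{ij}$, while carefully handling the dependence that the conditioning on $S_\lambda(X) = S$ introduces within each column.  The trivial containment $\bigwedge_{i \in R} \{ \iprod{x_i}{v} < \tau_c\} \subseteq \{ W < |R| \tau_c \}$ shows that it suffices to prove $\p{W < |R| \tau_c} \leq \rho^{|R|}$ under the working conditioning of the lemma.

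First, I would unpack the conditional distribution of $X$.  Because the columns of the unconditional $X$ are mutually independent and $\{S_\lambda = S\}$ is a product event across columns, conditioning on $S_\lambda = S$ leaves the columns independent.  For every $j$ with $v_j = 1$ (so $j \in S$), the column $(x_{1j}, \ldots, x_{nj})$ is uniform on $\pmo^n$ conditioned on $\frac{1}{n}\sum_i x_{ij} > \lambda$.  Exchangeability within such a column forces $\E[x_{ij}] \geq \lambda$ for every $i$, so $\E[W] \geq k |R| \lambda$.

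The key structural fact is that within each column the entries $\{x_{ij}\}_{i \in [n]}$ are negatively dependent in the sense of the note following Lemma~\ref{Hoeffding}: fixing any subset of the entries to $+1$ only makes the remaining entries more likely to be $-1$ (since the conditional column sum still has to exceed $n\lambda$), and symmetrically with the roles of $+1$ and $-1$ swapped.  Thus Hoeffding's bound applies to each column sum in both tails.  Combining this within-column negative dependence with the across-column independence via the standard MGF factorisation, Hoeffding's lower tail applies to $W$ as if its $k|R|$ summands were independent.  Since $\E[W] - |R|\tau_c \geq |R|\sqrt{2k \ln(1/\rho)}$, the resulting bound is
\begin{equation*}
\p{W < |R|\tau_c} \leq \exp\!\left( - \frac{\bigl(|R|\sqrt{2k\ln(1/\rho)}\bigr)^2}{2 \cdot k|R|}\right) = \exp(- |R| \ln(1/\rho)) = \rho^{|R|},
\end{equation*}
as required.

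The main subtlety is justifying Hoeffding's lower tail for the global sum $W$ from the mixed structure of within-column negative dependence and across-column independence.  The cleanest route I see is the textbook MGF argument: independence across columns factors $\E\bigl[e^{-s(W - \E[W])}\bigr]$ into a product over $j$, and the negative-dependence remark after Lemma~\ref{Hoeffding} supplies the per-column MGF bound.  Everything else is a routine Hoeffding-style optimisation, and no further machinery beyond what has already been recorded in the preliminaries is required.
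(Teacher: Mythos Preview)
Your approach is essentially the paper's: reduce $\bigwedge_{i\in R} E_{i,v}$ to the single event $\{W<|R|\tau_c\}$, establish the right dependence structure among the $x_{ij}$'s so that the lower-tail Hoeffding bound from the remark after Lemma~\ref{Hoeffding} applies to $W$, and read off $\rho^{|R|}$.

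One refinement the paper makes that you skip: rather than working directly with columns conditioned on $q_j>\lambda$, the paper first passes by a monotonicity/domination argument to the distribution $\widetilde{X}_S$ in which each column sum is \emph{exactly} $\lambda n$. In that exact-sum distribution the needed direction of dependence---setting some entries to $-1$ only raises the conditional mean of the remaining entries (``positive dependence'' in the paper's terminology, which is what the lower tail requires)---is immediate from the arithmetic $\E[x_a\mid B]=(n\lambda+|B|)/(n-|B|)\geq\lambda$, and the paper writes this out. Your one-line justification ``since the conditional column sum still has to exceed $n\lambda$'' is the right intuition, but under the inequality constraint the monotonicity is less self-evident than in the exact-sum case; the switch to $\widetilde{X}_S$ buys a clean, short proof of the dependence fact. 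Also note that it is the positive-dependence direction (fix to $-1$, others shift toward $+1$) that drives the lower tail here; the negative-dependence direction you lead with is not the one actually used.
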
 
\noindent To maintain the flow of the analysis, we defer the proof of this claim to Section~\ref{proofofclaim:corr_rows}

%independent. is not traced on a given set of $k$ columns is $\leq \rho$. The probability that a set of rows $T \subseteq [n]$ is not traced is $\pr{}{\textstyle \bigwedge_{i \in T} E_{i,v} \given G_S}$. This cannot be directly written as a product of probabilities of the individual events because the rows of $X_S$ are not independent. Nevertheless we have that, 

\medskip

By definition, $\attack$ fails on $v$ only if there exists a set $R$ of exactly $c n = e^2 \rho n$ rows such that $\bigwedge_{i \in R} E_{i,v}$.  Taking a union bound over all such sets $R$ and all $v$ , we have
\begin{align*}
\p{\exists v \in \binom{S}{k}~~\attack \textrm{ fails on $v$}}
\leq{} &\binom{d_{\lambda}}{k}\cdot \binom{n}{c n} \cdot \rho^{c n} \\
\leq{} &\left( \frac{ed_{\lambda}}{k} \right)^{k} \cdot \left( \frac{en\rho}{c n} \right)^{c n} \\
\leq{} &d_{\lambda}^{k} \cdot e^{-cn}
\end{align*}
where we have used the identity $\binom{a}{b} \leq (\frac{e a }{ b})^b$.  We have already set the parameter $\lambda$, and set $d_{\lambda} = 2de^{-\frac{\lambda^2n}{2}}$.  Thus, all that remains is to show that for our choice of parameters $d_{\lambda}^k \cdot e^{-cn} \leq \rho$, which is equivalent to $cn \geq \ln(1/\rho) + k\ln(d_{\lambda})$.  Substituting our choice of $\lambda$ gives the condition
$$
\frac{k\lambda^2n}{2} \geq \ln(1/\rho) + k\ln(2d) -cn
$$
One can check that, for our choice of $n = \frac{C_3^2k\ln(2d)}{8\ln(1/\rho)}$, and our choice of $\lambda = C_3\sqrt{\frac{\ln(2d)}{n}}$ where $C_3$ has been defined above, the preceding equation is satisfied.

Thus, we have established that 
$$
\p{\exists v \in \binom{S}{k}~~\attack \textrm{ fails on $v$}}
\leq d_{\lambda}^{k} \cdot e^{-cn} \leq \rho.
$$
As we have argued above, this implies that
$$
\p{\attack \textrm{ fails on $\hat{t}$}} \leq 2\rho + e^{-k/4} + e^{-k/6}
$$
This completes the proof of the completeness lemma.

\begin{comment}
Recall that $d_\lambda = 2de^{-\frac{\lambda^2n}{2}}$. Since we want the error probability to be smaller than $\rho$, we need that,
\begin{align}\label{cons1} 
 cn  \geq  \ln(1/\rho) + k\ln(d_\lambda) \Leftrightarrow \frac{k\lambda^2n}{2} \geq \ln(1/\rho) + k\ln(2d) -cn.
\end{align}
For our choice of $n = \frac{C_3^2k\ln(2d)}{8\ln(1/\rho)}$, and our choice of $\lambda = C_3\sqrt{\frac{\ln(2d)}{n}}$ where , the constraint given above is satisfied,   when $\lambda$ is $C\sqrt{\frac{\ln(2d)}{n}}$ for some constant $C$ which is big enough. It can be checked that $\lambda = C_3\sqrt{\frac{\ln(2d)}{n}}$ suffices.

\noindent So we get that,
$$\max_{\substack{S \subseteq [d]:\\ k \leq |S| \leq d_\lambda}}\prob\big[ \attack \textrm{ fails} \given   \vecq_{(k)} \geq \gamma,S_{\lambda} = S \big ] \leq \rho.$$

Substituting this back into equation~\ref{eqn1} we get,
$$\pr{X}{\attack \textrm{ fails}} \leq \rho + e^{-k/4} + e^{-k/6}.$$
\end{comment}

\end{proof}

\subsection{Proof of Claim~\ref{corr_rows}} \label{proofofclaim:corr_rows}

\jon{The section is already called "proof of claim 4.5" so I took out the proof environment.}

\jon{I put in a preamble so that we wouldn't need to keep writing $\land (q_j \geq \lambda, \forall j)$ everywhere.  I think it makes the notation a bit less cluttered.}

%\begin{proof}[Proof of Claim~\ref{corr_rows}]
	Recall that, for a given $X \in \pmo^{n \times d}$, $E_{i,v}$ is the event that $\langle x_i, v \rangle < \tau_c$ for a specific row $i$ and a specific vector $v \subseteq S$, where $S = S_\lambda$ is the set of columns $j$ of  $X$ such that $q_j \geq \lambda$.  Thus, we can think of $X_{S} \in \pmo^{n \times |S|}$ as a matrix with $|S|$ independent columns that are uniformly random subject to the constraint that each column's mean is at least $\lambda$.  Since, flipping some entries of $X_{S}$ from $-1$ to $+1$ can only increase $\langle x_i, v \rangle$, we will in fact use the distribution $\widetilde{X}_{S}$ in which each column's mean is exactly $\lambda n$.  Thus, when we refer to the probabilities of events involving random variables $x_{i,j}$, we will use this distribution on $X_{S}$ as the probability space.  Additionally, since $v$ is fixed, and the probability is the same for all $v$, we will simply write $E_i$ to cut down on notational clutter.
	
For a specific set $R \subseteq [n]$, we need to calculate $$\pr{X_{S}}{\bigwedge_{i \in R} E_{i}}.$$%; for simplicity we will assume that $R = [r]$. sd
We can write
	\begin{align}\label{last_label}
	\pr{\widetilde{X}_{S}}{\underset{i \in R}{\bigwedge} E_{i}}
	={} &\pr{\widetilde{X}_{S}}{\underset{i \in R}{\bigwedge} \left(\underset{j \in v}{\sum}x_{ij} < \tau_c \right)} \nonumber  \\
	%=&\pr{X_{S}}{\underset{i \in R}{\bigwedge} \left(\underset{j \in v}{\sum}x_{ij} < \tau_c \right) \wedge  (\underset{i \in R}{\tsum}\underset{j \in v}{\tsum}x_{ij} < r\tau_c) \wedge (q_{j} \geq \lambda, \forall j \in v) \wedge (\underset{i \in [n]}{\tsum}\underset{j \in v}{\tsum} x_{ij} \geq nk\lambda)} \nonumber \\
	\leq{} &\pr{\widetilde{X}_{S}}{\underset{i \in R, j \in v}{\sum} x_{ij} < |R| \tau_c }.
	\end{align}
	
	%We will henceforth condition on the event, $G$ that $(\underset{i \in [n]}{\tsum}\underset{j \in v}{\tsum} x_{ij} \geq nk\lambda) $. 
	The key property of $X_{S}$ is that its entries $x_{ij}$ are positively correlated.  That is, for every set $I \subset [n] \times S$ of variables $x_{ij}$, we have,
	\begin{align}\label{neg_dep}
	\pr{\widetilde{X}_{S}}{\forall (i,j) \in I~~~x_{ij} = -1} \leq \prod_{(i,j) \in I} \pr{\widetilde{X}_{S}}{x_{ij} = -1}. 
	\end{align}
	
	Since the columns of $\widetilde{X}_{S}$ are independent if we partition the elements of $I$ into sets $I_1,\ldots,I_k$, where each set $I_l$ has pairs of $I$ which come from the column $l$, then, $$\pr{\widetilde{X}_{S}}{\forall (i,j) \in I~~~x_{ij} = -1} = \prod_{l \in [k]} \pr{\widetilde{X}_{S}}{\forall (i,j) \in I_l~~~x_{ij} = -1}.$$ So it is enough to show that equation~\ref{neg_dep} holds when $I = \set{(i_1,l),\ldots,(i_p,l)}$. For simplicity of notation we will refer to these elements of $I$ as $\set{1,\ldots,p}$. We have that,
	\begin{align}\label{b1}
	\pr{}{\forall a \in I~~~x_{a} = -1} = \prod_{a = 1}^p \pr{}{x_{a} = -1 \given \left( \forall b \in \set{1,\dots,a-1}, x_b = -1 \right)}. 
	\end{align}
	We will show that each of the terms in the product is smaller than $\pr{}{x_{a} = -1}$. For a fixed $a \in I$, let $B$ be the set $\set{1,\dots,a-1}$ and let $E$ be the event that $(\forall b \in B, x_b = -1)$.  Since every column of $X_{S}$ sums to $n \lambda$, we have
	$$
	\E\left[\underset{i \in [n]}{\sum} x_{il} \given B\right] = n\lambda.
	$$ 
	On the other hand, since the bits in $B$ are all set to $-1$ and all the other bits in column $l$ are equal in expectation, 
	$$
	\E\left[\underset{i \in [n]}{\sum} x_{il} \given B\right] = -|B| + (n - |B|) \cdot \left(\E\left[x_a \given  B\right]\right),
	$$ 
	which means that
	$$\E[x_a \given B] \geq \lambda = \E[x_a ].$$  Since ${\pr{ }{x_a = -1} = (1 - \E[x_a])/2}$, we get that ${\pr{ }{x_a = -1 \given B} \leq \pr{ }{x_a = -1}}.$ Substituting this back into~\eqref{b1}, we get that the variables are positively correlated.
	
	We have that, $$\E\left[\sum_{i \in R, j \in v} x_{ij}\right] = |R|k\lambda,$$ and since Hoeffding's inequality applies equally well to positively-correlated random variables~\cite{PS97}, we also have
	%It was shown in~\cite{PS97} that Hoeffding's inequality holds for sums of positively correlated variables. As discussed before,  So we apply the inequality to the sum $\underset{(i,j) \in (R,v)}{\sum} x_{ij}$, to get that, 
	$$\pr{\widetilde{X}_{S}}{ \underset{i \in R, j \in v}{\sum} x_{ij} \leq |R|\tau_c} \leq \pr{\widetilde{X}_{S}}{\sum_{i \in R, j \in V} x_{ij} < |R|k\lambda - |R|\sqrt{2k\ln(1/\rho)}} \leq \exp\left(-\frac{\left(|R|\sqrt{2 k \ln(1/\rho)}\right)^2}{2 |R| k} \right) = \rho^{|R|}.$$
	
	Substituting this in equation~\ref{last_label}, we get that,
	$$	\pr{\widetilde{X}_{S}}{\underset{i \in R}{\bigwedge} E_{i} } \leq \rho^{|R|}.$$  Finally, we use the fact that, by our definition of the distributions $X_{S}, \widetilde{X}_{S}$, we have
	$$
	\pr{X_{S}}{\bigwedge_{i \in R} E_{i}} \leq \pr{\widetilde{X}_{S}}{\bigwedge_{i \in R} E_{i}} \leq \rho^{|R|}.
	$$
	This completes the proof.

\section*{Acknowledgements}
We are grateful to Adam Smith and Thomas Steinke for many helpful discussions about tracing attacks and private top-$k$ selection.

\bibliographystyle{alpha}
\bibliography{diffprivacy,refs}

\end{document}